\newcommand{\STAB}[1]{\begin{tabular}{@{}c@{}}#1\end{tabular}}
\theoremstyle{definition}
\newtheorem{de}{Definition}
\theoremstyle{plain}
\newtheorem{theo}[de]{Theorem}
\theoremstyle{remark}
\newcommand{\R}{\mathbb{R}}
\newcommand{\p}{\mathbb{P}}
\newcommand{\E}{\mathbb{E}}
\newcommand{\G}{\mathcal{G}}
\newcommand{\X}{\mathcal{X}}
\newcommand{\1}{\mathds{1}}
\newcommand{\dd}{\mathrm{d}}
\begin{document}

\begin{frontmatter}

% "Title of the Paper"
\title{A 2-step estimation procedure for semiparametric mixture cure models}
\runtitle{2-step estimation for mixture cure models}

% indicate corresponding author with \corref{}
% \author{\fnms{John} \snm{Smith}\thanksref{t1}\corref{}\ead[label=e1]{smith@foo.com}\ead[label=e2,url]{www.foo.com}}
% \thankstext{t1}{Thanks to somebody} 
% \address{line 1\\ line 2\\ \printead{e1}\\ \printead{e2}}

 \author[A]{\fnms{Eni} \snm{Musta}\corref{}\ead[label=e1]{e.musta@uva.nl}},
\author[B]{\fnms{Valentin} \snm{Patilea}\ead[label=e2]{valentin.patilea@ensai.fr}}
\and
\author[C]{\fnms{Ingrid} \snm{Van Keilegom}\ead[label=e3]{ingrid.vankeilegom@kuleuven.be}}

\address[A]{Korteweg de Vries Institute for Mathematics, University of Amsterdam, Netherlands, \printead{e1}}
\address[B]{CREST, Ensai, France, \printead{e2}}
\address[C]{ORSTAT, KU Leuven, Belgium, \printead{e3}}

\runauthor{E. Musta, V. Patilea \& I. Van Keilegom }

%\author{\fnms{???} \snm{???}\ead[label=e1]{???}}
%\address{\printead{e1}}
%\and
%\author{\fnms{???} \snm{???}\ead[label=e2]{???}}
%\address{\printead{e2}}

\begin{abstract}
Cure models have been developed as an alternative modelling approach to conventional survival analysis in order to account for the presence  of cured subjects that will never experience the event of interest. Mixture cure models, which model separately the cure probability and the survival of uncured subjects depending on a set of covariates, are  particularly useful for distinguishing curative from life-prolonging effects. In practice, it is common to assume a parametric model for the cure probability and a semiparametric model for the survival of the susceptibles. Because of the latent cure status, maximum likelihood estimation is performed by means of the iterative EM algorithm. Here, we focus on the cure probabilities and  propose a two-step procedure to improve upon the performance of the maximum likelihood estimator when the sample size is not large. The new method is based on the idea of presmoothing by first constructing a nonparametric estimator and then projecting it into the desired parametric class. We investigate the theoretical properties of the resulting estimator and show through an extensive simulation study for the logistic-Cox model that it outperforms the existing method. Practical use of the method is illustrated through two melanoma datasets. 
\end{abstract}

\begin{keyword}[class=MSC]
\kwd[Primary ]{62N02}
%\kwd{}
%\kwd[; secondary ]{}
\end{keyword}

\begin{keyword}
\kwd{cure model}
\kwd{logistic model}
\kwd{presmoothing}
\kwd{survival analysis}
\end{keyword}

% history:
% \received{\smonth{1} \syear{0000}}

%\tableofcontents

\end{frontmatter}

	\section{Introduction}
Cure models are used to analyze time until occurrence of an event of interest when a proportion of the study population is immune to that event (cured). They are recently becoming increasingly popular in oncology as curative treatments are now a possibility, meaning that some patients will not experience cancer relapse/death (see for example \cite{legrand2019cure,othus2012cure}). More broadly, cure models find applications in studies of fertility (\cite{van2013can}), hospitalization of COVID-19 patients (\cite{pedrosa2022cure}), equipment failure in engineering (\cite{meeker1987limited}), credit scoring in economics (\cite{dirick2017time,dirick2019macro}),  etc.. What makes statistical modeling and estimation challenging when not all subjects are susceptible to the event of interest, is the unobserved cure status. As a consequence of a limited follow-up period, all cured subject are observed as censored, hence mixed with the uncured ones. 

There are two main families of cure models: promotion time models and mixture cure models (see \cite{AK2018} and \cite{peng2021cure} for an overview). The latter ones are particularly attractive in practice because, by 
 separately modeling the uncure probability (incidence) and the survival of the susceptibles (latency) given possibly different sets of covariates, they are able to distinguish a curative from a life-prolonging effect. Early works on mixture cure models were fully parametric approaches (\cite{farewell82,yamaguchi92,kuk92}), while more recently semi-parametric (\cite{ST2000,peng2000,li2002,zhang2007}) and non-parametric (\cite{XP2014,PK2019,AKL19,lopez2017incidence,lopez2017latency}) models have been proposed. Among them, the semiparametric models are often used in practice as a reasonable compromise between flexibility and simplicity. These models assume a parametric form of the incidence and a semiparametric form for the latency, with the most common choice being the mixture of the logistic with the Cox proportional hazards model (e.g. \cite{yilmaz2013insights,stringer2016cure,wycinka2017}). 

Estimation in the logistic-Cox or in general semi-parametric mixture cure models is mostly carried out via the Expectation-Maximization algorithm because of the latent cure status. Such estimators were proposed in \cite{peng2000} and \cite{ST2000} for the logistic-Cox model; in \cite{li2002,zhang2007,lu2010} for the logistic-accelerated failure time model. The procedure is implemented in the R-package \texttt{smcure} (\cite{cai_smcure}). However, for limited sample sizes which are common in practice, such iterative procedures are characterized by large mean-squared-error (MSE), convergence problems and instability of the estimators for the incidence component depending on which variables are included in the latency model (see for e.g. \cite{musta2020presmoothing}). This might lead to incorrect conclusions regarding significant effects. 

Here we propose a new second stage estimator based on presmoothing with the aim of improving upon an initially available estimator, that can for example be the \texttt{smcure} estimator, for small and moderate sample sizes. The initial estimator is used to construct a one-dimensional covariate, conditional on which we compute a nonparametric estimator of the cure probabilities. Afterwards, the nonparametric estimator is projected on the desired parametric class (for example logistic). This allows for direct estimation of the parametric incidence component despite the latent cure status.  We focus on the cure fraction, but once that is estimated, one can also fit a semiparametric model to the latency component. Compared to the method proposed in \cite{musta2020presmoothing}, this approach does not restrict us to a one-dimensional covariate and does not require multidimensional smoothing, which is essential for practical purposes. Apart from the cure model setting, the idea of constructing a parametric estimator by nonparametric estimation has been previously proposed in the context of linear regression, variable selection and functional linear regression (\cite{cristobal1987class,presmoothing_var_sel,ferraty2012presmoothing}). The novelty of our method lies in using presmoothing as a second stage estimator where a preliminary available estimator is used to reduce the covariate dimension to one. In this way we only need to choose one bandwidth independently of the number of covariates and still profit from the advantages of presmoothing: lower MSE and more stable estimators.

The paper is organized as follows. In Sections \ref{sec:model} and \ref{sec:method}, we describe the model and the estimation procedure. In Section \ref{sec:asymptotics} we show that the resulting estimator is consistent and square-root-n convergent with a Gaussian limit distribution, provided that the initial estimator is consistent. As a particular case, we focus on the logistic/Cox mixture cure model in Section~\ref{sec:simulations} and illustrate through an extensive simulation study that the proposed estimator outperforms the \texttt{smcure} estimator by significantly reducing its mean squared error. In addition, the second step using presmoothing makes the estimator more stable towards misspecifications in the latency model. Finally, in Section~\ref{sec:application}, we apply the method to two medical datasets and show that  in practice it can lead to different conclusions compared to the \texttt{smcure} estimator. 

\section{The semiparametric mixture cure model}
\label{sec:model}
Suppose we are interested in the time $T$ until a certain event happens for a mixed population of cured ($T=\infty$) and uncured ($T=T_0<\infty$) subjects. Let $B$ be  a $0$-$1$ random variable indicating the uncured status: $B=1$ for susceptible individuals and $B=0$ otherwise. Due to the limited follow-up period,  we cannot actually observe $T$ and $B$. Instead we observe a finite follow-up time $Y=\min(T,C)$ and a censoring indicator $\Delta=\1_{\{T\leq C\}}$, where $C$ denotes  the censoring time. As a result, for all the censored observations, the cure status is unknown. In the mixture cure model, the survival function of $T$ given two covariate vectors $X\in\R^p$, $Z\in\R^q$, is given by 
$$
S(t|x,z)=\p(T>t|{X=x, Z=z})=\pi_0(x)+(1-\pi_0(x))S_u(t|z),
$$
where $S_u(t|z)=\p(T>t | Z=z, B=1)$ is the survival function of the susceptibles and $\pi_0(x)=\p(B=0 | X=x)$ denotes the cure probability. 
Using two covariate vectors $X$ and $Z$  for modeling the incidence and the latency allows the cure probability and the survival of the uncured to be affected by different variables. However it does not exclude situations in which the two vectors $X$ and $Z$ are exactly the same or share some components.  

In the context of mixture cure models, the classical survival analysis assumption of independent censoring, means that $T_0\perp (C,X) | Z$ and $B\perp (C,T_0,Z)|X$, which imply that
\begin{equation}
	\label{eqn:CI1}
	T\perp C| (X,Z),
\end{equation}
(see Lemma~1 in the supplementary material of \cite{musta2020presmoothing}). As a result, we also have 
\begin{equation}
	\label{eqn:condition_X_Z}
	\p(T=\infty|X,Z)=\p(T=\infty|X) \quad \text{ and }\quad	\p(T_0\leq t |X,Z)=\p(T_0\leq t|Z).
\end{equation}

Among various modeling approaches for the incidence and the latency, the most common choice is a parametric (logistic) model for the incidence and a semiparametric (Cox or accelerated failure time) model for the latency (\cite{PK2019,burke2020likelihood,legrand2019cure,yilmaz2013insights,stringer2016cure,wycinka2017}). The popularity of such a choice is due to the simplicity and ease of interpretation. We focus on this type of model and assume that 
\[
\pi_0(x)=1-\phi(\gamma_0^Tx),
\]
where $\phi: \R\to[0,1]$ is a known function, $\gamma_0\in \R^{p+1}$ and  $\gamma^T_0$ denotes the transpose of the vector $\gamma_0$. Here the first component of $x$ is taken to be equal to 1 and the first component of $\gamma$ corresponds to the intercept. In particular, for the logistic model, we have 
\begin{equation}
	\label{eqn:logistic}
	\phi(u)=\frac{e^u}{1+e^{u}}.
\end{equation}
To check the fit of this model in practice, one can compare the prediction error with that of a more flexible single-index model as done in \cite{AKL19} or use the test proposed in \cite{muller2019goodness} which is currently developed only for a one-dimensional covariate. 

For the latency, we assume a semiparametric model $S_u(t|z) =S_u(t|z;\beta_0,\Lambda_0) $ depending on
a finite-dimensional parameter $\beta_0\in \mathcal \R^q$, and an infinite-dimensional parameter $\Lambda_0$.  
The main examples we keep in mind are the Cox proportional hazards (PH) model
\begin{equation} 
	\label{SuCox}
	S_u(t|z)
	= \exp\{-\Lambda_0(t)\exp(\beta^T_0z)\},
\end{equation}
and  the accelerated failure time model (AFT)
\[
S_u(t|z)=\exp\left\{-\Lambda_0\left(\exp\left(\beta^T_0z\right)t\right)\right\},
\]
where  $\Lambda_0$ is the baseline cumulative hazard.

The goal is to estimate the true parameters $\gamma_0$, $\beta_0$ and $\Lambda_0$ on the basis of  
$n$ i.i.d. observations $(Y_1,\Delta_1,X_1,Z_1),\dots,(Y_n,\Delta_n,X_n,Z_n)$. 
The general conditions under which the semiparametric mixture cure model is identifiable, meaning that different parameter values lead to different distributions of the observed variables $(Y,\Delta,X,Z)$, were derived in \cite{Motti} and are the following: 
\begin{itemize}
	\item[(I1)] if $\phi(\gamma,X)=\phi(\tilde\gamma,X)$ almost surely, then $\gamma=\tilde{\gamma},$
	\item[(I2)] the function $S_u(\cdot|z)$ has support $[0,\tau(z)]$,  
	\item[(I3)] $\p(C>\tau(Z)|X,Z)>0$ for almost all $X$ and $Z$,
	\item[(I4)] if, for all $t\geq 0$, we have $	S_u(t|Z;\Lambda,\beta)=S_u(t|Z;\tilde{\Lambda},\tilde{\beta}) $  almost surely, then $\Lambda=\tilde{\Lambda}$ and $\beta=\tilde{\beta}$,
\end{itemize}
In the particular case of the logistic-Cox model the conditions become:
\begin{itemize}
	\item[(I1')]  for all $x$, $ 0 <\phi(\gamma_0^Tx) < 1$,
	\item[(I2')] the function $S_u$ has support $[0; \tau_0]$ for some $\tau_0<\infty$,
	\item[(I3')] $ P(C >\tau_0|X;Z) > 0$ for almost all X and Z,
	\item[(I4')]  the matrices $Var(X)$ and $Var(Z)$ are positive definite,
\end{itemize}
(see Proposition 1 and 2 in \cite{Motti}). 
Conditions I3 and I3' are of particular importance in the context of mixture cure models and essentially tell us that, in order to correctly identify the cure proportion, we need sufficiently long follow-up beyond  the time when the events occur. In practice, this can be evaluated based on the plateau of the Kaplan-Meier estimator and the expert (medical) knowledge. 	

\section{The 2-step estimation procedure}
\label{sec:method}

Estimation in semiparametric mixture cure models is usually performed via the expectation maximization algorithm because of the latent cure status. Such method has been proposed by \cite{peng2000,ST2000} for the logistic-Cox mixture cure model, and by \cite{li2002,zhang2007} for the logistic-AFT model. The procedure is implemented in the R package \texttt{smcure} (\cite{cai_smcure}). 
Despite the simplicity of the method, simultaneous computation of  $\gamma_0$, $\beta_0$ and $\Lambda_0$  through an iterative procedure leads to several problems for finite, not large sample sizes which are commonly encountered in practice. This has been previously reported and illustrated in \cite{musta2020presmoothing,burke2020likelihood,han2017statistical}. The main concerns are the large MSE, convergence problems and instability 
of the estimator for the incidence component depending on which
variables are included in the latency model. In particular, if
the latency model is misspecified, even the estimators of the incidence parameters suffer from induced bias. To alleviate these problems, we propose the following 2-step estimation procedure that makes use of presmoothing. 

We start with some preliminary estimator $\tilde{\gamma}_n$ of $\gamma_0$. This can be any estimator that satisfies the conditions described in Section~\ref{sec:asymptotics} and in particular for the logistic-Cox or logistic-AFT model we can use the \texttt{smcure} estimator.  
We use this preliminary estimator to construct the one-dimensional index $\tilde{V}=\tilde{\gamma}'_nX$ estimating $V=\gamma'_0X$. Based on this new one-dimensional covariate, we  compute a nonparametric estimator of the cure probability for each subject defined as follows
\begin{equation}
	\label{def:hat_pi}
	\hat\pi_n(x)=\prod_{t\in\R} \left(1-\frac{\hat{H}_{1,\tilde{\gamma}_n}\left(\dd t|\tilde{\gamma}^T_nx\right)}{\hat{H}_{\tilde{\gamma}_n}\left([t,\infty)|\tilde{\gamma}^T_nx\right)}\right){,}
\end{equation}
where  {$\hat{H}_{\tilde{\gamma}_n}([t,\infty)|u)=\hat{H}_{1,\tilde{\gamma}_n}([t,\infty)|u)+\hat{H}_{0,\tilde{\gamma}_n}([t,\infty)|u)$,} $\hat{H}_{1,\tilde{\gamma}_n}(\dd t|u) = \hat{H}_{1,\tilde{\gamma}_n}((t-\dd t,t]|u)$ for small $\dd t$ and 
\[
\hat{H}_{l,\tilde{\gamma}_n}([t,\infty)|u)=\sum_{i=1}^n\frac{{k}_b\left(\tilde{\gamma}^T_nX_i-u\right)}{\sum_{j=1}^n{k}_b\left(\tilde{\gamma}^T_nX_j-u\right)}\1_{\{Y_i\geq t, \Delta_i=l\}},\quad l=0,1{,}
\] 
are estimators of
\begin{equation}
	\label{def:H}
	H_l([t,\infty)|u)=\p\left(Y\geq t,\Delta=l|{\gamma}^T_0X=u\right),
\end{equation}
and
$H([t,\infty)|u)=H_1([t,\infty)|u)+H_0([t,\infty)|u)$. Here $k$ is a one-dimensional kernel function, $b=b_n$ is a bandwidth sequence and ${k}_{{b}}(\cdot)=k(\cdot/b)/b$. 

{The estimator} $\hat\pi_n(x)$ coincides with the Beran estimator of the {conditional survival function} $S$ at the largest observed event time $Y_{(m)}$ {and does not require any specification of $\tau_0$}. {Since $\hat{H}_{1,\tilde{\gamma}_n}\left(\dd t|\tilde{\gamma}^T_nx\right)$ is different from zero only at the observed event times, computation of $\hat\pi_n(x)$ requires only a product over $t$ in the set of the observed event times.} Afterwards, we consider the logistic likelihood
\[
\hat L_{n,1}(\gamma)=\prod_{i=1}^n \phi(\gamma^TX_i)^{1-\hat\pi_n(X_i)}[1-\phi(\gamma^TX_i)]^{\hat\pi_n(X_i)}{,}
\]
and define $\hat\gamma_n$ as the maximizer of 
\begin{equation}
	\label{def:hat_L_gamma}
	\log \hat{L}_{n,1}(\gamma)=\sum_{i=1}^n\Big\{\left[1-\hat\pi_n(X_i)\right] \log  \phi(\gamma^TX_i)+\hat\pi_n(X_i)\log \left[1-\phi(\gamma^TX_i)\right]\Big\}{\tau(X_i)}.
\end{equation}
{We introduce a trimming function $\tau(\cdot)\geq 0$ to avoid regions where the density function of the index $\gamma^TX$, for $\gamma$ in a neighborhood of $\gamma_0$, approaches zero (as done for example in \cite{lopez2013single}). We discuss possible choices of $\tau(\cdot) $ in Section \ref{sec:asymptotics}.} Existence and uniqueness of $\hat\gamma_n$ hold under the same conditions as for the maximum likelihood estimator in the binary outcome regression model where $1-\hat\pi_n(X_i)$ is replaced by {the outcome $B_i$}. For example, in the logistic model,   it is required that $p<n$ and the matrix of the variables $X$ has full rank. 
Estimation of the latency component can then be performed by maximizing the likelihood of the mixture model 
\[
%L(\gamma,\beta,\Lambda)=
\prod_{i=1}^{n}\left\{\phi(\gamma^TX_i)f_u(Y_i|Z_i;\beta,\Lambda)
\right\}^{\Delta_i}\!\left\{1\!-\phi(\gamma^TX_i)+\phi(\gamma^TX_i)S_u(Y_i|Z_i;\beta,\Lambda) \right\}^{1-\Delta_i},
\]
with respect to $\beta$ and $\Lambda$ for $\gamma=\hat\gamma_n$. Here $ f_u(t|Z;\beta,\Lambda)=-\frac{\dd}{\dd t}S_u(t|Z;\beta,\Lambda)$. In practice this would mean performing the EM algorithm (as in the \texttt{smcure} package) only on the  latency component, i.e. keeping $\gamma=\hat{\gamma}_n$ fixed and updating  $\beta$ and $\Lambda$ in each iteration. 

We call this a 2-step estimator because it relies on a preliminary estimator $\tilde{\gamma}_n$, which is used to construct the one-dimensional covariate  $\tilde{\gamma}^T_nX$. In this way, independently of the dimension of $X$, the kernel estimator requires only one bandwidth parameter. The idea of a single-index structure is used in several papers to avoid multidimensional regression (e.g. \cite{strzalkowska2014beran,lopez2013single}), but has not previously  been exploited in the context of cure models and presmoothing. A nonparametric estimator for the cure probability in \eqref{def:hat_pi} could be obtained using any nonparametric estimator of the conditional survival function as done for example in \cite{pelaez2021test,pelaez2021sort} for estimation of the default probability. Here we use the Beran estimator since it is easy to compute and exhibits a good behavior.

\subsection{Rationale behind the new approach}
By definition we have 
\[
\pi_0(x)=\p(T=\infty\mid X=x)=\p\left(T=\infty\mid \gamma^T_0X=\gamma^T_0x\right).
\]
{Moreover, since from our model it follows that $T\perp C\mid\gamma^T_0X,\beta_0^TZ$}, we have
\[
\begin{aligned}
	H_1\left(\dd t|\gamma^T_0x,{\beta_0^Tz}\right)&=F_C\left([t,\infty)|\gamma^T_0x,{\beta_0^Tz}\right)F_T\left(\dd t|\gamma^T_0x,{\beta_0^Tz}\right),\\
	H\left([t,\infty)|\gamma^T_0x,{\beta_0^Tz}\right)&=F_T\left([t,\infty]|\gamma^T_0x,{\beta_0^Tz}\right)F_C\left([t,\infty)|\gamma^T_0x,{\beta_0^Tz}\right),
\end{aligned}
\]
which yields
\[
\frac{F_T\left(\dd t|\gamma^T_0x,{\beta_0^Tz} \right)}{F_T\left([t,\infty]|\gamma^T_0x,{\beta_0^Tz}\right)}=\frac{H_1\left(\dd t|\gamma^T_0x,{\beta_0^Tz} \right)}{H\left([t,\infty)|\gamma^T_0x,{\beta_0^Tz}\right)}.
\]
As in \cite{PK2019}, we obtain that 
\[
\p\left(T=\infty\mid \gamma^T_0X=\gamma^T_0x,{\beta_0^TZ=\beta_0^Tz}\right)=\prod_{t\in\R} \left\{1-\frac{H_1\left(\dd t|\gamma^T_0x,{\beta_0^Tz} \right)}{H\left([t,\infty)|\gamma^T_0x,{\beta_0^Tz}\right)}\right\},
\]
where $\prod_{t\in\R} $ denotes the product integral. By the first part of \eqref{eqn:condition_X_Z}, the product integral is also equal to $P(T = \infty | \gamma_0^T X = \gamma_0^T x)$.  {Similarly, under the slightly stronger assumption that $T\perp C\mid\gamma^T_0X$, we obtain
	\[
	\p\left(T=\infty\mid \gamma^T_0X=\gamma^T_0x\right)=\prod_{t\in\R} \left\{1-\frac{H_1\left(\dd t|\gamma^T_0x \right)}{H\left([t,\infty)|\gamma^T_0x\right)}\right\},
	\] 
}
This justifies the definition of our estimator in \eqref{def:hat_pi}. This assumption  is satisfied if {$C\perp (X,Z)\mid \gamma_0^TX$. We restrict ourselves to this case for simplicity in order to have conditioning on only one index. However, the method can be used in general conditioning on both $\gamma_0^TX$ and $\beta_0^TZ$. }
We illustrate this through one of the simulation settings in Section~\ref{sec:simulations}.

\section{Asymptotic results}
\label{sec:asymptotics}
In this section we focus on models where the survival function of the susceptibles $S_u$ has fixed support $[0,\tau_0]$ such that 
\begin{equation}\label{eqn:cond_support}
	\inf_{\gamma\in G}
	\p(C>\tau_0\mid \gamma^TX)>0 \,\text{ almost surely}.
\end{equation} This is the case when the latency follows a Cox regression model and the identifiability assumption (I3') is satisfied.  Moreover, we assume that $T\perp C\mid\gamma^T_0X$ but the results can be generalized as mentioned in the previous section.

Let us sketch the arguments we will use to obtain asymptotic properties of $\hat\gamma_n$. Note that $\hat\gamma_n$ is the maximizer of the Bernoulli-type log-likelihood in \eqref{def:hat_L_gamma}. 
%, which is the same as the one in \cite{musta2020presmoothing} (equation 10) but with a different estimator $\hat\pi_n$. Consistency and asymptotic normality of $\hat{\gamma}_n$ follow from Theorem 1 and Theorem 3 of \cite{musta2020presmoothing}, which are stated for a general estimator $\hat\pi_n$.
 Hence, the main issue is dealing with the nonparametric estimator $\hat\pi_n$, which replaces the latent binary outcome.
By definition we have
\[
\hat\pi_n(x)=\hat{g}_{\tilde{\gamma}_n}(\tilde{\gamma}^T_nx),
\]
where
\begin{equation}
	\label{def:hat_g}
	\hat{g}_\gamma(u)=1-\hat{F}_n(\tau_0\mid \gamma^TX=u)=\prod_{t\in\R} \left(1-\frac{\hat{H}_{1,{\gamma}}(\dd t|u)}{\hat{H}_{{\gamma}}([t,\infty)|u)}\right).
\end{equation}
Note that the product integral actually over $t\in\R$ is the same as over $t\leq\tau_0$ because $\hat{H}_{1,{\gamma}}(\dd t|u)=0$ for $t>Y_{(m)}$, where $Y_{(m)}\leq\tau_0$ is the last observed event time.
We can also write 
\[
\pi_0(x)=g_{\gamma_0}(\gamma^T_0x),
\]
with 
\begin{equation}
	\label{def:g}
	{g}_\gamma(u)=1-{F}_T(\tau_0\mid \gamma^TX=u)=\prod_{t\in\R} \left(1-\frac{{H}_{1}(dt|u)}{H([t,\infty)|u)}\right).
\end{equation}
Hence 
\begin{equation}
	\label{eqn:hat_pi_decomp}
	\begin{aligned}
		\hat\pi_n(x)-\pi_0(x)&=\hat{g}_{\tilde{\gamma}_n}\left(\tilde{\gamma}^T_nx\right)-g_{\gamma_0}\left(\gamma^T_0x\right)\\
		&=\left\{\hat{g}_{{\gamma}_0}\left({\gamma}^T_0x\right)-g_{\gamma_0}\left(\gamma^T_0x\right)\right\}+\left\{\hat{g}_{\tilde{\gamma}_n}\left(\tilde{\gamma}^T_nx\right)-\hat{g}_{\gamma_0}\left(\gamma^T_0x\right)\right\}\\
		&=\left\{\hat{F}_n\left(\tau_0\mid \gamma^T_0X=\gamma^T_0x\right)-{F}_T\left(\tau_0\mid \gamma^T_0X=\gamma^T_0x\right)\right\}\\
		&\qquad+\left\{\hat{g}_{\tilde{\gamma}_n}\left(\tilde{\gamma}^T_nx\right)-\hat{g}_{\gamma_0}\left(\gamma^T_0x\right)\right\}.
	\end{aligned}
\end{equation}
The first term on the right hand side of the equation, can be dealt with as usual being the difference between $F_n$ and $F$ conditionally on a one dimensional covariate $\gamma^T_0X$. The second term results from using $\tilde\gamma_n$ instead of  $\gamma_0$ when constructing the one-dimensional covariate $\tilde{\gamma}^T_nX$. The behaviour of this term depends on the properties of the preliminary estimator $\tilde{\gamma}_n$. We first formulate the results for a general prelimary estimator $\tilde{\gamma}_n$ and a general parametric function $\phi$. We then show that, for the logistic-Cox model, the maximum likelihood estimator satisfies the required conditions. 

The following assumptions are needed for consistency of $\hat{\gamma}_n$. 
\begin{itemize}
	\item[(C1)] The preliminary estimator is consistent, i.e. $\tilde{\gamma}_n-\gamma_0=o_P(1)$. 
	\item[(C2)] The parameter $\gamma_0$ lies in the interior of a compact set $G\subset\R^p$.
	\item[(C3)]There exist some constants $a>0$, $c>0$ such that
	\[
	\left|\phi(\gamma_1^Tx)-\phi(\gamma_2^Tx)\right|\leq c\Vert\gamma_1-\gamma_2\Vert^a,\qquad\forall\gamma_1,\gamma_2\in G,\,\forall x\in\X,
	\]
	where $\Vert\cdot\Vert$ denotes the Euclidean norm and $\X\subset \R^p$ is the support of $X$.
	\item[(C4)] $\inf_{\gamma\in G}\inf_{x\in\X}\phi(\gamma^Tx)>0$ and $\sup_{\gamma\in G}\sup_{x\in\X}\phi(\gamma^Tx)<1$.
	\item[(C5)] For any $\gamma\in G$, the support $\X_\gamma$ of  $\gamma'\X$ is a bounded convex subset of $\R$. The density $f_{X_\gamma}(\cdot)$ of $X_\gamma$ is twice differentiable with a bounded second derivative.
	\item[(C6)] The bandwidth $b$ is such that $nb^4\to 0$ and $nb^{3+\xi}/(\log b^{-1}) \to \infty$ for some $\xi>0$.
	\item[(C7)] The kernel $k$ is a twice continuously differentiable, symmetric  probability density function with compact support.
	\item[(C8)] (i) The functions $H([0,t]|u)$, $H_1([0,t]|u)$ defined in \eqref{def:H} are twice differentiable with respect to $u$, with uniformly bounded derivatives for all $t\leq \tau_0$, $u\in\X_{\gamma_0}$. Moreover, there exist continuous nondecreasing functions $L_1$, $L_2$, $L_3$ such that $L_i(0)=0$, $L_i(\tau_0)<\infty$ and 	for all $t, s\in[0,\tau_0]$, $u\in\X_{\gamma_0}$,
	\[
	\begin{split}
		\left|H_c(t|u)-H_c(s|u)\right|&\leq \left|L_1(t)-L_1(s)\right|,\\ \left|H_{1c}(t|u)-H_{1c}(s|u)\right|&\leq \left|L_1(t)-L_1(s)\right|,\\
		\left|\frac{\partial H_c(t|u)}{\partial u}-\frac{\partial H_c(s|u)}{\partial u}\right|&\leq \left|L_2(t)-L_2(s)\right|,\\
		\left|\frac{\partial H_{1c}(t|u)}{\partial u}-\frac{\partial H_{1c}(s|u)}{\partial u}\right|&\leq \left|L_3(t)-L_3(s)\right|,
	\end{split}
	\]
	where the subscript c denotes the continuous part of a function.
	
	(ii) The number of jump points for the distribution function $F_C(t|u)$ of the censoring times given the index $\gamma_0^TX=u$, are {finite and} the same for all $u$. The partial derivative of $F_C(t|u)$ with respect to $u$ exists and is uniformly bounded for all $t\leq\tau_0$, $u\in\X_{\gamma_0}$. Moreover, 
	the partial derivative with respect to $u$ of $F_T(t|u)$ {(distribution function  of the survival times $T$ given $\gamma_0^TX=u$)}  exists and is uniformly bounded for all $t\leq\tau_0$, $u\in\X_{\gamma_0}$.   
	\item[(C9)]The function $(x,\gamma)\mapsto g_\gamma(\gamma^Tx)$ is continuously differentiable with respect to $\gamma$ and the vector $\nabla_\gamma g_\gamma(\gamma^Tx)$ is continuous with respect to  $(x,\gamma)$.
\end{itemize}

Assumptions (C2)-(C4), (C6)-(C8) are standard assumptions (see for example \cite{PK2019,musta2020presmoothing,KA99}.  Assumptions (C5) and (C9) are needed because we compute the nonparametric estimator using the index $\tilde{\gamma}_n^TX$ instead of $\gamma_0^TX$. Such assumptions appear for example in \cite{lopez2013single}.

{A possible choice of the trimming function $\tau(\cdot)$ in \eqref{def:hat_L_gamma} could be $\tau(x)=\1_{\tilde{\X}}(x)$ if we know a set $\tilde{\X}$ such as
	\begin{equation}
		\label{eqn:trimming}
		\inf_{\gamma\in G}\inf_{x\in\tilde{\X}}f_{X_\gamma}(\gamma^Tx)=c>0.	
	\end{equation}
	Otherwise, as shown in \cite{lopez2013single}, one can take $\tau(x)=\1_{\{f_{X_{\gamma_0}}(\gamma_0^Tx)\geq c\}}$ for some $c>0$, which is asymptotically equivalent to the previous proposal. In practice, we can use $\tau(x)=\1_{\{\hat{f}_{X_{\tilde\gamma}}(\tilde\gamma_n^Tx)\geq c\}}$ based on the preliminary estimator $\tilde{\gamma}_n$. }
\begin{theo}
	\label{theo:1}
	Assume that conditions (C1)-(C9) are satisfied. Then $$\hat\gamma_n-\gamma_0=o_P(1).$$
\end{theo}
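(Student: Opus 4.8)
The plan is to treat $\hat\gamma_n$ as an M-estimator and invoke the classical argmax (Wald-type) consistency argument. Write the normalized criterion from \eqref{def:hat_L_gamma} as
\[
M_n(\gamma)=\frac1n\sum_{i=1}^n\Big\{[1-\hat\pi_n(X_i)]\log\phi(\gamma^TX_i)+\hat\pi_n(X_i)\log[1-\phi(\gamma^TX_i)]\Big\}\tau(X_i),
\]
and let its population counterpart be
\[
M(\gamma)=\E\Big[\tau(X)\big\{[1-\pi_0(X)]\log\phi(\gamma^TX)+\pi_0(X)\log[1-\phi(\gamma^TX)]\big\}\Big].
\]
It then suffices to establish: (i) $\gamma_0$ is a well-separated maximizer of $M$ over $G$; and (ii) $\sup_{\gamma\in G}|M_n(\gamma)-M(\gamma)|=o_P(1)$. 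Together with (C2) these give $\hat\gamma_n-\gamma_0=o_P(1)$.

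For (i), since the model postulates $1-\pi_0(x)=\phi(\gamma_0^Tx)$, a direct computation gives $M(\gamma_0)-M(\gamma)=\E\big[\tau(X)\,\mathrm{KL}\big(\phi(\gamma_0^TX)\,\|\,\phi(\gamma^TX)\big)\big]\ge0$, where $\mathrm{KL}(p\|q)$ is the Kullback--Leibler divergence between the Bernoulli laws with means $p$ and $q$, and equality holds if and only if $\phi(\gamma^TX)=\phi(\gamma_0^TX)$ almost surely on $\{\tau(X)>0\}$. By the identifiability condition (I1) (respectively (I1') in the logistic case, where one additionally uses that the second-moment matrix of $X$ restricted to the trimming set is nonsingular) this forces $\gamma=\gamma_0$, so the maximizer is unique. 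Continuity of $\gamma\mapsto M(\gamma)$ on the compact set $G$ — which follows from (C3), (C4) and (C2), the logarithmic factors being bounded and Lipschitz in $\gamma$ uniformly in $x$ — then upgrades uniqueness to well-separatedness.

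For (ii), decompose $M_n(\gamma)-M(\gamma)=[M_n(\gamma)-\bar M_n(\gamma)]+[\bar M_n(\gamma)-M(\gamma)]$, where $\bar M_n$ is $M_n$ with every $\hat\pi_n(X_i)$ replaced by $\pi_0(X_i)$. The second bracket is handled by a uniform law of large numbers: by (C2)--(C4) the class $\{x\mapsto\tau(x)[(1-\pi_0(x))\log\phi(\gamma^Tx)+\pi_0(x)\log(1-\phi(\gamma^Tx))]:\gamma\in G\}$ has a bounded envelope and is Lipschitz in the finite-dimensional index $\gamma$ over the compact set $G$, hence is Glivenko--Cantelli. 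Using that $(1-\hat\pi_n)\log\phi+\hat\pi_n\log(1-\phi)$ minus its analogue with $\pi_0$ equals $(\hat\pi_n-\pi_0)[\log(1-\phi)-\log\phi]$, the first bracket is bounded in absolute value by
\[
\Big(\sup_{\gamma\in G}\sup_{x\in\X}\big|\log\tfrac{1-\phi(\gamma^Tx)}{\phi(\gamma^Tx)}\big|\Big)\cdot\max_{1\le i\le n}\big|\hat\pi_n(X_i)-\pi_0(X_i)\big|\,\1_{\{\tau(X_i)>0\}},
\]
whose first factor is finite by (C4); so (ii) reduces to $\sup_{x\in\X:\,\tau(x)>0}|\hat\pi_n(x)-\pi_0(x)|=o_P(1)$.

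This last step is the main obstacle. I would use the decomposition \eqref{eqn:hat_pi_decomp} together with \eqref{def:hat_g}--\eqref{def:g}, writing $\hat\pi_n(x)-\pi_0(x)=\{\hat g_{\gamma_0}(\gamma_0^Tx)-g_{\gamma_0}(\gamma_0^Tx)\}+\{\hat g_{\tilde\gamma_n}(\tilde\gamma_n^Tx)-\hat g_{\gamma_0}(\gamma_0^Tx)\}$. On the trimming set $\gamma_0^Tx$ stays in a region where $f_{X_{\gamma_0}}$ is bounded away from zero, so the first term is $o_P(1)$ uniformly there by the standard uniform consistency of the Beran estimator under (C5)--(C8), as in \cite{PK2019} and \cite{KA99}. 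For the second term, inserting $g_{\tilde\gamma_n}(\tilde\gamma_n^Tx)$ and $g_{\gamma_0}(\gamma_0^Tx)$ yields
\[
\{\hat g_{\tilde\gamma_n}(\tilde\gamma_n^Tx)-g_{\tilde\gamma_n}(\tilde\gamma_n^Tx)\}-\{\hat g_{\gamma_0}(\gamma_0^Tx)-g_{\gamma_0}(\gamma_0^Tx)\}+\{g_{\tilde\gamma_n}(\tilde\gamma_n^Tx)-g_{\gamma_0}(\gamma_0^Tx)\};
\]
the last term is $o_P(1)$ uniformly in $x\in\X$ by (C9) (the gradient of $(x,\gamma)\mapsto g_\gamma(\gamma^Tx)$ in $\gamma$ is continuous, hence bounded on the compact $\X\times G$) combined with $\tilde\gamma_n-\gamma_0=o_P(1)$ from (C1), while the first two terms are controlled by a version of the Beran consistency that is \emph{uniform in $\gamma$} over a shrinking neighborhood of $\gamma_0$ and in the evaluation point. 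Establishing this uniform-in-$\gamma$ control — which means revisiting the kernel-smoothing steps defining $\hat H_{l,\gamma}$ with the estimated index $\gamma^TX_i$ — is the technical heart of the proof, and it can be carried out with the empirical-process arguments developed for single-index Beran-type estimators in \cite{lopez2013single}. Combining the three displays gives (ii) and, with (i), the conclusion $\hat\gamma_n-\gamma_0=o_P(1)$.
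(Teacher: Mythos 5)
Your proposal is correct and follows essentially the same route as the paper: reduce the problem to uniform consistency of $\hat\pi_n$ via the decomposition \eqref{eqn:hat_pi_decomp}, handle the Beran-estimator term by the uniform consistency results of \cite{KA99}, control the effect of replacing $\gamma_0$ by $\tilde\gamma_n$ through single-index arguments as in \cite{lopez2013single}, and conclude by a standard M-estimator consistency argument using that $\gamma_0$ uniquely maximizes the population Bernoulli criterion (the paper invokes Theorem 1 of \cite{chen2003} rather than spelling out the well-separation argument). The only minor variation is the step you flag as the ``technical heart'': where you add and subtract $g_{\tilde\gamma_n}(\tilde\gamma_n^Tx)$ and need uniform-in-$\gamma$ Beran consistency, the paper instead applies a mean-value bound $|\hat g_{\tilde\gamma_n}(\tilde\gamma_n^Tx)-\hat g_{\gamma_0}(\gamma_0^Tx)|\leq\Vert\tilde\gamma_n-\gamma_0\Vert\sup_{x,\gamma}|\nabla_\gamma\hat g_\gamma(\gamma^Tx)|$ and shows the supremum is $O_P(1)$ via the uniform gradient convergence \eqref{eqn:nabla_g}, proved as in Lemma A.2 of \cite{lopez2013single} --- the same source you point to.
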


In order to obtain asymptotic normality of $\hat{\gamma}_n$ at rate $\sqrt{n}$, we need the following additional assumptions.  
\begin{itemize}
	\item[(N1)] For each $x\in\X$, the function $\gamma\mapsto\phi(\gamma^Tx)$ is twice continuously differentiable with uniformly bounded derivatives in $G\times\X$.
	%	\item[(N2)] The function $g_{\gamma_0}(\cdot): \X_{\gamma_0}\to [0,1]$ is continuously differentiable and such that $\sup_{u\in\X_{\gamma_0}}|g'_{\gamma_0}(u)|\leq M$ and 
	%	\[
	%	\sup_{u_1, u_2\in\X_{\gamma_0}}\frac{|g'_{\gamma_0}(u_1)-g'_{\gamma_0}(u_2)|}{|u_1-u_2|^\xi}\leq M
	%	\]
	%	for some $M>0$ and $\xi\in(0,1]$.
	\item[(N2)] The matrix $\E\left[\phi'(\gamma_0^TX)^2XX^T\right]$ is positive definite.
	\item[(N3)] The preliminary estimator $\tilde\gamma_n$ is $\sqrt{n}$ consistent and such that there exists a function $\zeta$ such that 
	\[
	\tilde{\gamma}_n-\gamma_0=\frac1n\sum_{i=1}^n\zeta(T_i,\Delta_i,X_i,Z_i)+R_n,
	\]
	with $\Vert R_n\Vert=o_P(n^{-1/2})$ and $\E[\zeta(T,\Delta,X,Z)]=0$.
\end{itemize}
Again (N1)-(N2) are standard assumptions, while (N3) arises from the use of the index $\tilde{\gamma}_n^TX$ instead of $\gamma_0^TX$. As a result the asymptotic variance of $\hat\gamma_n$ will also depend on the asymptotic variance of the preliminary estimator $\tilde{\gamma}_n$.
\begin{theo}
	\label{theo:2}
	Assume that conditions (C1)-(C9), (N1)-(N3) are satisfied. Then
	\[
	\sqrt{n}(\hat{\gamma}_n-\gamma_0)\xrightarrow{d}N(0,\Sigma_\gamma),
	\] 
	with covariance matrix $\Sigma_\gamma$ defined in \eqref{def:Sigma_gamma}.
\end{theo}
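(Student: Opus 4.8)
The plan is to recast the maximization of \eqref{def:hat_L_gamma} as a $Z$-estimation problem and then track exactly how the nonparametric pilot $\hat\pi_n$ feeds into the asymptotic variance. By Theorem~\ref{theo:1} and (N1)--(N2), $\hat\gamma_n$ is consistent and eventually interior to $G$, so with probability tending to one it solves $\Psi_n(\hat\gamma_n)=0$, where
\[
\Psi_n(\gamma)=\frac1n\sum_{i=1}^n w_\gamma(\gamma^TX_i)\bigl\{1-\hat\pi_n(X_i)-\phi(\gamma^TX_i)\bigr\}X_i\,\tau(X_i),\qquad w_\gamma(u)=\frac{\phi'(u)}{\phi(u)\,(1-\phi(u))}.
\]
A one-term Taylor expansion around $\gamma_0$ gives $\sqrt n(\hat\gamma_n-\gamma_0)=-\dot\Psi_n(\bar\gamma_n)^{-1}\sqrt n\,\Psi_n(\gamma_0)$ for an intermediate $\bar\gamma_n$. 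Using (N1), consistency of $\hat\gamma_n$, and $\hat\pi_n\to\pi_0$ together with the model identity $1-\pi_0(x)=\phi(\gamma_0^Tx)$ (which makes the term where $w_\gamma$ is differentiated vanish in the limit), one obtains $\dot\Psi_n(\bar\gamma_n)\xrightarrow{P}-M$ with $M=\E\bigl[\phi'(\gamma_0^TX)^2\,\{\phi(\gamma_0^TX)(1-\phi(\gamma_0^TX))\}^{-1}XX^T\tau(X)\bigr]$; since the scalar weight is bounded away from $0$ on the trimming set by (C4), assumption (N2) guarantees $M$ is invertible. It then remains to produce an i.i.d.\ representation for $\sqrt n\,\Psi_n(\gamma_0)$.

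The key simplification is again $1-\pi_0(X_i)=\phi(\gamma_0^TX_i)$, which kills the ``oracle'' part of $\Psi_n(\gamma_0)$ and leaves
\[
\Psi_n(\gamma_0)=-\frac1n\sum_{i=1}^n w_{\gamma_0}(\gamma_0^TX_i)\,X_i\,\tau(X_i)\,\bigl\{\hat\pi_n(X_i)-\pi_0(X_i)\bigr\}.
\]
Now insert the decomposition \eqref{eqn:hat_pi_decomp}, writing $\Psi_n(\gamma_0)=-A_n-B_n$, where $A_n$ carries the deviation of the Beran estimator of the conditional distribution at the horizon $\tau_0$ built with the \emph{true} index $\gamma_0^TX$, and $B_n$ carries the index-estimation correction $\hat g_{\tilde\gamma_n}(\tilde\gamma_n^TX_i)-\hat g_{\gamma_0}(\gamma_0^TX_i)$. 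For $A_n$ I would substitute the known uniform i.i.d.\ expansion of the conditional Kaplan--Meier/Beran estimator with a one-dimensional covariate, valid under (C5)--(C8), of the form $\hat F_n(\tau_0\mid\gamma_0^TX=u)-F_T(\tau_0\mid\gamma_0^TX=u)=\tfrac1n\sum_j k_b(\gamma_0^TX_j-u)f_{X_{\gamma_0}}(u)^{-1}\eta(Y_j,\Delta_j,u)+O_P(b^2)+o_P(n^{-1/2})$ uniformly in $u$, with $\E[\eta(Y,\Delta,u)\mid\gamma_0^TX=u]=0$. Plugging this into $A_n$ yields a second-order $V$-statistic in the kernel; its Hoeffding projection, combined with $nb^4\to0$ to kill the $O(b^2)$ bias and $nb^{3+\xi}/\log b^{-1}\to\infty$ to kill the degenerate part and the diagonal $i=j$ terms, gives $\sqrt n\,A_n=\tfrac1{\sqrt n}\sum_j\psi_A(Y_j,\Delta_j,X_j)+o_P(1)$, where $\psi_A$ is obtained by letting the kernel act as a point mass on $\{\gamma_0^TX_i=\gamma_0^TX_j\}$, namely $\psi_A(y,\delta,x)=\E\bigl[w_{\gamma_0}(\gamma_0^TX)X\tau(X)\mid\gamma_0^TX=\gamma_0^Tx\bigr]\eta(y,\delta,\gamma_0^Tx)$, which inherits the mean-zero property from $\eta$.

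For $B_n$ I would first show that replacing $\hat g_\gamma$ by $g_\gamma$ inside the increment is harmless, i.e.\ $\{\hat g_{\tilde\gamma_n}(\tilde\gamma_n^TX_i)-g_{\tilde\gamma_n}(\tilde\gamma_n^TX_i)\}-\{\hat g_{\gamma_0}(\gamma_0^TX_i)-g_{\gamma_0}(\gamma_0^TX_i)\}=o_P(n^{-1/2})$ uniformly in $i$; this is a stochastic-equicontinuity statement for the Beran process indexed by $\gamma$ over the $O_P(n^{-1/2})$-neighbourhood furnished by (N3), the modulus being controlled by (C5), (C7) and the bandwidth conditions (C6). A first-order Taylor expansion of the deterministic, $C^1$ map $\gamma\mapsto g_\gamma(\gamma^TX_i)$ (assumption (C9)) then gives $B_n=\bigl[\tfrac1n\sum_i w_{\gamma_0}(\gamma_0^TX_i)X_i\tau(X_i)\,\nabla_\gamma g_\gamma(\gamma^TX_i)^T\big|_{\gamma_0}\bigr](\tilde\gamma_n-\gamma_0)+o_P(n^{-1/2})$, where by the law of large numbers the bracket tends to $D:=\E\bigl[w_{\gamma_0}(\gamma_0^TX)X\tau(X)\,\nabla_\gamma g_\gamma(\gamma^TX)^T\big|_{\gamma_0}\bigr]$. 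Using the expansion in (N3) yields $\sqrt n\,B_n=D\,\tfrac1{\sqrt n}\sum_i\zeta(T_i,\Delta_i,X_i,Z_i)+o_P(1)$. Combining, $\sqrt n\,\Psi_n(\gamma_0)=-\tfrac1{\sqrt n}\sum_i\bigl\{\psi_A(Y_i,\Delta_i,X_i)+D\zeta(T_i,\Delta_i,X_i,Z_i)\bigr\}+o_P(1)$ is an i.i.d.\ mean-zero sum with finite variance, so the CLT and $\sqrt n(\hat\gamma_n-\gamma_0)=M^{-1}\sqrt n\,\Psi_n(\gamma_0)+o_P(1)$ give the assertion with $\Sigma_\gamma=M^{-1}\,\mathrm{Var}\bigl(\psi_A(Y,\Delta,X)+D\zeta(T,\Delta,X,Z)\bigr)M^{-1}$.

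The main obstacle is the treatment of $A_n$: obtaining a genuinely \emph{uniform} i.i.d.\ expansion of the Beran estimator at the fixed horizon $\tau_0$, and then handling the resulting degenerate second-order $V$-statistic --- in particular identifying the correct projection (the averaging of $X_i$ given the index $\gamma_0^TX_i$) and verifying that the diagonal and degenerate remainders are $o_P(n^{-1/2})$ under (C6). The stochastic-equicontinuity step controlling the $\gamma$-variation of the Beran process in $B_n$ is the second, lesser difficulty; the remaining ingredients ($\dot\Psi_n\to-M$, the law of large numbers for $D$, and the final CLT) are routine given (C1)--(C9) and (N1)--(N3).
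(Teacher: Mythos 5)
Your proposal is correct in outline, but it takes a genuinely different route from the paper. The paper does not redo the Z-estimation analysis from scratch: it verifies the high-level conditions (AN1)--(AN4) of the general presmoothing theorem in \cite{musta2020presmoothing} --- an entropy bound for a H\"older-type class $\Pi$ of functions $x\mapsto g(\gamma^Tx)$ (covering numbers plus a Whitney extension), the containment $\p(\hat\pi_n\in\Pi)\to 1$ (which needs delicate uniform and H\"older-seminorm control of $\hat g'_{\tilde\gamma_n}$, including bounds like $O_P(1/b)$ and $O_P(n^{-1/2}b^{-1-\xi})$ for kernel-weight derivatives), the $o_P(n^{-1/4})$ sup-norm rate, and an i.i.d.\ representation of the \emph{smoothed} term $\E^*\big[(\hat\pi_n(X)-\pi_0(X))\,w(X)X\tau(X)\big]$ obtained from the Beran representation of \cite{DA2002} together with a mean-value argument for the estimated index; the cited theorem then delivers normality with $\Sigma_\gamma=\Gamma_1^{-1}V\Gamma_1^{-1}$. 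You instead expand the score directly and treat the empirical sum $n^{-1}\sum_i w(\gamma_0^TX_i)X_i\tau(X_i)\{\hat\pi_n(X_i)-\pi_0(X_i)\}$ evaluated at the data points via a second-order V-statistic/Hoeffding projection (bias removed by $nb^4\to0$, diagonal and degenerate parts by the lower bandwidth condition), plus a separate equicontinuity-and-Taylor step for the index part; this replaces the entropy/Donsker machinery that the paper inherits from the general theorem. The two routes land on the same sandwich: your $M$ equals $-\Gamma_1$ since $1/\phi+1/(1-\phi)=1/\{\phi(1-\phi)\}$, your $D\zeta$ is the paper's $Q\zeta$ correction coming from (N3) and the explicit formula $\nabla_\gamma g_\gamma(\gamma^Tx)\vert_{\gamma_0}=\phi'(\gamma_0^Tx)\{\E[X\mid\gamma_0^TX=\gamma_0^Tx]-x\}$, and your $\psi_A$, with the conditional expectation of $w(\gamma_0^TX)X\tau(X)$ given the index acting on the Kaplan--Meier influence function and the factor $1-F_T(\tau_0\mid\cdot)=1-\phi$ cancelling one weight, is the analogue of the paper's $\psi$. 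What each buys: the paper's route is modular and outsources the hardest empirical-process step, at the cost of constructing the H\"older class and proving $\hat\pi_n$ lies in it; your route is self-contained and avoids that construction, but the burden you correctly flag --- a uniform $o_P(n^{-1/2})$ i.i.d.\ expansion of the Beran estimator at $\tau_0$ and careful control of an $n$-dependent-kernel V-statistic --- is exactly what the appeal to \cite{DA2002} and to the general theorem encapsulates, and for the index increment the paper's simpler argument (mean value theorem for $\hat g_\gamma$ combined with $\sup_{x,\gamma}|\nabla_\gamma\hat g_\gamma(\gamma^Tx)-\nabla_\gamma g_\gamma(\gamma^Tx)|=o_P(1)$ and $\|\tilde\gamma_n-\gamma_0\|=O_P(n^{-1/2})$) would be easier to make rigorous than proving equicontinuity of the Beran process in $\gamma$ directly.
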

Given the complicated form of the covariance matrix $\Sigma_\gamma$, we suggest using a bootstrap procedure for estimating the standard errors as also done  for the maximum likelihood estimator of a semi-parametric mixture cure model.

If we consider the particular case of a logistic-Cox mixture cure model and take the maximum likelihood estimator as a preliminary estimator $\tilde{\gamma}_n$, then assumptions (C2)-(C4), (C9), (N1)-(N2) are obviously satisfied for the logistic model. Morevoer  (C1) and (N2) are satisfied if the cumulative baseline function $\Lambda_0$ is strictly increasing and continuously differentiable under the condition
\begin{equation}
	\label{eqn:jump_cond}
	\inf_{z}\p(T_0\geq\tau_0|Z=z)>0, %=\inf_{z}\p(T_0=\tau_0|Z=z)>0,
\end{equation}
(see Theorem 2 and Theorem 3 in \cite{Lu2008}). 
  Then, from Theorems \ref{theo:1} and \ref{theo:2} it follows that the 2-step estimator is also consistent and $\sqrt{n}$-convergent. If we continue estimating the latency sub-model using this estimator of $\gamma_0$, then the resulting estimator of $\beta_0$ and $\Lambda_0$ have the desired asymptotic behavior as in Theorems 2 and 4 in \cite{musta2020presmoothing}. The proof remains the same given that they only use consistency and the asymptotic i.i.d. expression of the estimator as in assumption (N3). 
\section{Simulation study}
\label{sec:simulations}
In this section we investigate the finite-sample behaviour of the 2-step approach in the logistic/Cox mixture cure model and compare it with the maximum likelihood estimator implemented in the R package \texttt{smcure}. We use the \texttt{smcure} estimator as preliminary estimator $\tilde{\gamma}_n$ for the new method. 

We make some standard and common choices when computing the nonparametric estimator in~\eqref{def:hat_pi}. 
The kernel function $k$ is taken to be the Epanechnikov kernel $k(u)=(3/4)(1-u^2)\1_{\{|u|\leq1\}}
$. Using the preliminary estimator $\tilde{\gamma}_n$, we compute the smoothing bandwidth by cross-validation as implemented in the R package \texttt{np} for kernel estimators of conditional distribution functions, in our case for the  estimation of $H=H_0+H_1$ given $\tilde{\gamma}^TX$. In addition, we restrict ourselves to the interval  $[0,Y_{(m)}]$, where $Y_{(m)}$ is the last observed event time since the estimator of the cure probability $\hat\pi$ in \eqref{def:hat_pi} is essentially a product over values of $t$ that are equal to the observed event times. This means that we use the {cross-validation} bandwidth for estimating the conditional distribution $H(t|\tilde{\gamma}^Tx)$ for $t\leq Y_{(m)}$. 

{We could use a trimming function  $$\tau(x)=\1_{\{\hat{f}_{X_{\tilde\gamma}}(\tilde\gamma_n^Tx)\geq c\}},$$ for some small value of $c$ as proposed in Section \ref{sec:asymptotics}. However, we observe that in practice this does not affect the results since $c$ can be chosen as arbitrarily small. Hence, we do not do any trimming so that we do not have to worry about the choice of the trimming constant. The trimming is mainly introduced for the asymptotic study in order to avoid the assumption that the density of the index is bounded from below by a positive constant.}

We consider four different models and for each of them, three scenarios, covering a wide range of settings with different number and choice of covariates (continuous and discrete), different  cure and censoring rate, and different censoring mechanisms (independent of covariates, depending on the same index as the incidence model, depending on both indexes of the incidence and latency). The models are as follows. 

\textit{Model 1.} Both incidence and latency depend on two  independent continuous covariates $X_1=Z_1\sim N(0,1)$ and $X_2=Z_2\sim \text{Unif}(-1,1)$. We generate the cure status $B$ as a Bernoulli random variable with success probability  $\phi(\gamma^TX)$ where $\phi$ is the logistic function in \eqref{eqn:logistic} and $\gamma=(\gamma_0,1.5,1.5)$. The survival times for the uncured observations are generated according to a Weibull proportional hazards model
\[
S_u(t|z)=\exp\left(-\mu t^\rho\exp\left(\beta^Tz\right)\right),
\]
and are truncated at $\tau_0=15$ for $\rho=0.75$, $\mu=1.5$ and $\beta=(0.5,0.3)$. The censoring times are independent from $X$ and $T$. They are generated from the exponential distribution with parameter $\lambda_C$ and are truncated at $\tau=17$. 

\textit{Model 2.} Both incidence and latency depend on three independent covariates $X_1=Z_1\sim N(0,1)$, $X_2=Z_2\sim \text{Bernoulli}(0.3)$ and $X_3=Z_3\sim \text{Bernoulli}(0.7)$. The cure status  and the survival times for the uncured observations are generated as in Model 1 for $\gamma=(\gamma_0,-1,1,-0.3)$,  $\beta=(-0.8,1.5,-0.5)$, $\rho=0.75$, $\mu=1.5$, and $\tau_0=7$. The censoring times are  generated according to a Weibull proportional hazards model
\[
S_C(t|x)=\exp\left(-\lambda_C\mu t^{\rho}\exp\left(\gamma^Tx\right)\right),
\] 
for various choices of $\lambda_C$ and are truncated at $\tau=9$. 

\textit{Model 3.} For the incidence we consider four independent covariates: $X_1\sim N(0,1)$, $X_2\sim\text{Unif}(-1,1)$, $X_3$ and $X_4$ are Bernoulli random variables with parameters $0.4 $ and  $0.6$ respectively. The latency  depends on three covariates: $Z_1\sim N(0,1)$, $Z_2=X_2$  and $Z_3=X_4$.  The cure status  and the survival times for the uncured observations are generated as in Model 1 for $\gamma=(\gamma_0,-0.3,0.8,0.5,-1)$, $\rho=0.75$, $\mu=1.5$, $\beta=(0.1,0.4,-0.2)$ and $\tau_0= 10$.  The censoring times are  generated  generated according to a Weibull proportional hazards model
\[
S_C(t|x)=\exp\left(-\lambda_C\mu t^{\rho}\exp\left(0.4\gamma^Tx+0.5\beta^Tz\right)\right),
\] 
for various choices of $\lambda_C$ and are truncated at $\tau=12$.  

\textit{Model 4.} For the incidence we consider five independent covariates: $X_1\sim N(0,1)$, $X_2\sim\text{Unif}(-1,1)$, $X_3$ is Binomial with parameters $2$ and $0.5$, $X_4$ and $X_5$ are Bernoulli random variables with parameters $0.4 $ and  $0.6$ respectively. The latency  depends on three covariates: $Z_1\sim N(0,1)$, $Z_2=X_3$  and $Z_3=X_4$.  The cure status  and the survival times for the uncured observations are generated as in Model 1 for $\gamma=(\gamma_0,-0.8,0.3,-0.4,0.5,0.6)$, $\rho=0.75$, $\mu=1.5$, $\beta=(0.2,-0.5,0.3)$ and $\tau_0= 7$. 
The censoring times are independent from $X$, $Z$ and $T$. They are generated from the exponential distribution with parameter $\lambda_C$ and are truncated at $\tau=9$. 

For the four models we choose the values of the unspecified parameters $\gamma_0$ and $\lambda_C$ in such a way that the cure rate is around $20\%$, $40\%$ or $60\%$ and the difference between the cure and the censoring rate is around $5\%$, $10\%$ or $15\%$. The specification of the parameters and the corresponding censoring and cure rates are given Table~\ref{tab:models}. The truncation of the  survival  and censoring times on $[0,\tau_0]$ and $[0,\tau]$ is made in such a way that $\tau_0<\tau$ and  condition \eqref{eqn:jump_cond} is satisfied but in practice
it is unlikely to observe event times at $\tau_0$. In this way, we try to find a compromise between theoretical assumptions and  real-life scenarios. Model 3 illustrates the behavior of the method when the censoring times depends on both the indexes of the incidence and latency models (which was assumed for simplicity in the theoretical study). 

\begin{table}
	\caption{	\label{tab:models}Parameter values and model characteristics for each scenario.}
	\centering
	\addtolength{\tabcolsep}{-4pt}
	\fbox{%
		\begin{tabular}{ccccccc}
			Model & Scenario & $\gamma_0$ & &  $\lambda_C$ & Cens. rate & Cure rate\\
			\hline
			& & & & & & \\[-10pt]
			& $1 $ &$2$ & & $0.4 $ &$36\%$ & $20\% $\\
			%	\cline{2-6}
			1& $2$ & $0.6$  &  & $0.4 $ & $50\%$ & $40\% $\\
			%	\cline{2-6}
			& $3 $	& $-0.5$  & & $0.3$ & $63\%$ & $58\% $\\
			\hline
			& & & & & &  \\[-10pt]
			& $1$	& $1.6$ 
			&   & $1/35 $ & $30\%$ & $20\% $\\
			%			\cline{2-6}
			2& $2 $  & $0.4 $ &  & $1/20 $ & $45\%$ & $40\% $\\
			%	\cline{2-6}
			& $3 $	& $-0.6 $  & & $1$ & $75\%$ & $60\% $\\
			\hline
			& & & & &  & \\[-10pt]
			& $1$	& $2$ &   &  $1/9$ & $35\%$ & $20\% $\\
			%	\cline{2-6}
			3	& $2$& $0.9$& &    $1/7 $ & $50\%$ & $40\% $\\
			%	\cline{2-6}
			&  $3$	&$-0.1 $ & & $1/7 $ & $65\%$ & $60\% $\\
			\hline
			& & & & &  &\\[-10pt]
			& $1$ 	& $1.5 $    && $0.6$ & $25\%$ & $20\% $\\
			%	\cline{2-6}
			4	& $2$	& $0.3 $  & & $0.3$ & $55\%$ & $40\% $\\
			%	\cline{2-6}
			& $3$ 	&$-0.6 $ & & $0.4 $ & $70\%$ & $60\% $\\
			%\hline
		\end{tabular}
	}
\end{table}
We consider samples of size $n=200$ and $n=400$ since we aim to provide a method that improves upon the maximum likelihood estimator for small and moderate sample size.
For each configuration, $1000$  datasets were generated and the estimators of $\beta$ and $\gamma$ were computed through \texttt{smcure} and the proposed 2-step approach. We report the bias, variance and mean squared error (MSE) of the estimators, computed over the iterations for which the \texttt{smcure} procedure converges, in Tables~\ref{tab:results1}-\ref{tab:results3}. In some scenarios, mainly corresponding to the ones with $15\%$ additional censoring compared to the cure rate and smaller sample size, the iterative procedure of the EM algorithm in \texttt{smcure} does not converge. The most problematic setting in this regard is Model 2 scenario 3, for which $73/1000$ iterations do not converge for $n=200$ and $36/1000$ for $n=400$. The boxplots of the estimators for both methods in these non-convergent iterations are shown in Figure~\ref{fig:non_conv}. In the other settings, only around $2\%$ or less of the iterations do not converge.  

\begin{table}
	\caption{\label{tab:results1}Bias, variance and MSE of $\hat\gamma$ and $\hat\beta$ for \texttt{smcure} (second rows) and the 2-step approach (first rows) in Models 1 and 2.}
	\centering
	%	\addtolength{\tabcolsep}{-2pt}
	{\small		\scalebox{0.85}{
			\fbox{
				\begin{tabular}{cccrrrrrrrrr}
					%	\hline
					%			& & & & & & & & & && \\[-10pt]
					%	&&&\multicolumn{9}{c}{Censoring scheme}\\
					%	\cline{4-12}
					&	& & & & & & & & && \\[-7pt]
					&	&&\multicolumn{3}{c}{Scenario 1}&\multicolumn{3}{c}{Scenario 2}&\multicolumn{3}{c}{Scenario 3}\\
					Mod.&	n & Par. &  Bias & Var. & MSE & Bias & Var. & MSE & Bias & Var. & MSE\\[2pt]
					\hline
					&	 & & & & & & & & && \\[-8pt]
					1&	$200$ & $\gamma_1 $ & $-0.007 $  & $0.246 $ & $0.246  $ & $-0.013  $ & $0.062 $ & $0.062 $ & $0.088  $ & $0.049 $ & $0.057 $\\
					& &  & $0.243 $  & $0.534 $ & $ 0.593 $ & $0.058 $ & $0.092$ & $0.095$ & $  0.096$ & $0.053 $ & $0.062$\\
					& & $\gamma_2 $ & $-0.009 $  & $0.217$ & $0.217 $ & $ -0.035$ & $0.116$ & $0.117 $ & $-0.043 $ & $0.089$ & $0.091 $\\
					& & & $ 0.201$  & $0.302 $ & $0.343  $ & $ 0.096 $ & $0.146 $ & $0.155 $ & $0.049 $ & $0.099 $ & $0.102$\\
					& & $\gamma_3 $ & $-0.050 $  & $0.406$ & $0.408  $ & $ -0.028 $ & $0.205$ & $0.206 $ & $-0.031  $ & $0.152 $ & $0.153 $\\
					& & & $ 0.189$  & $0.717 $ & $0.753  $ & $ 0.108 $ & $0.242 $ & $0.253 $ & $0.063  $ & $0.174$ & $0.178 $\\
					& & $\beta_1 $ & $-0.006$  & $0.015 $ & $0.015  $ & $0.005  $ & $ 0.024$ & $ 0.024$ & $0.015 $ & $0.034$ & $0.034 $\\
					& &  & $-0.009 $  & $0.016$ & $ 0.016 $ & $- 0.002 $ & $0.025 $ & $0.025 $ & $0.009  $ & $0.034 $ & $0.034 $\\
					& & $\beta_2 $ & $0.008$  & $0.035$ & $0.035 $ & $0.008  $ & $ 0.050$ & $ 0.050$ & $0.012  $ & $0.071$ & $0.072 $\\
					& &  & $0.003 $  & $0.036$ & $ 0.036 $ & $ -0.002 $ & $0.052 $ & $0.052 $ & $0.002  $ & $0.073 $ & $0.073$\\
					\cline{2-12}
					&&&&&&&&&&&\\[-8pt]
					&	$400$ & $\gamma_1 $ & $-0.004$  & $0.109 $ & $0.109  $ & $-0.002  $ & $0.036$ & $0.036 $& $0.000  $ & $0.026$ & $0.026 $ \\
					& &  & $0.129 $  & $0.161 $ & $ 0.178 $ & $  0.040$ & $0.043 $ & $0.045$& $0.004  $ & $0.027 $ & $0.027$ \\
					& & $\gamma_2 $ & $-0.015 $  & $0.094 $ & $0.095 $  & $-0.023  $ & $0.059 $ & $0.059 $& $ -0.012$ & $0.047 $ & $0.048$\\
					& & & $ 0.101$  & $0.108$ & $0.118  $  & $0.059  $ & $0.059$ & $0.063 $& $ 0.057 $ & $0.052 $ & $0.055 $\\
					& & $\gamma_3 $ & $-0.008$  & $0.201 $ & $0.201 $  & $-0.019 $ & $0.110$ & $0.110$& $ -0.010$ & $0.084 $ & $0.084 $\\
					& & & $ 0.116$  & $0.247 $ & $0.261  $ & $0.065 $ & $0.115 $ & $0.119 $ & $ 0.060 $ & $0.088 $ & $0.091 $\\
					& & $\beta_1 $ & $0.005$  & $0.008 $ & $0.008  $ & $0.013  $ & $0.012$ & $0.012 $ & $0.011  $ & $ 0.017$ & $ 0.017$\\
					& &  & $0.003 $  & $0.008 $ & $ 0.008 $ & $0.009  $ & $0.012 $ & $0.012 $ & $ 0.006$ & $0.017 $ & $0.017 $\\
					& & $\beta_2 $ & $0.007 $  & $0.019 $ & $0.019  $  & $0.012 $ & $0.026 $ & $0.027 $& $0.015  $ & $ 0.034$ & $ 0.034$\\
					& &  & $0.004 $  & $0.019 $ & $0.019 $  & $0.005 $ & $0.027$ & $0.027 $& $ 0.008$ & $0.035 $ & $0.035 $\\
					\cline{1-12}
					& & & & & & & & & && \\[-8pt]
					$2 $	& $200$ & $\gamma_1 $ & $0.049 $  & $0.248 $ & $0.250  $ & $0.011  $ & $ 0.114$ & $ 0.114$ & $ 0.000 $ & $0.244 $ & $ 0.244$\\
					& &  & $0.103 $  & $0.268 $ & $0.278 $ & $0.021  $ & $0.121$ & $0.122 $ & $0.097$ & $0.389$ & $0.398 $\\
					& & $\gamma_2 $ & $0.017$  & $0.073 $ & $0.074 $ & $0.010 $ & $ 0.047$ & $0.047 $ & $ 0.090 $ & $ 0.091$ & $0.099 $\\
					& & & $-0.081 $  & $ 0.088$ & $ 0.094 $ & $-0.040 $ & $0.051$ & $0.052 $ & $-0.154 $ & $0.169 $ & $0.193 $\\
					& & $\gamma_3 $ & $0.025 $  & $0.947$ & $0.948 $ & $-0.009  $ & $ 0.171$ & $0.171$ & $ -0.102 $ & $ 0.299$ & $0.310$\\
					& & & $0.167 $  & $ 1.895$ & $ 1.923 $ & $0.035 $ &$0.177$ & $0.178 $ & $0.084 $ & $0.444 $ & $0.451 $\\
					& & $\gamma_4 $ & $0.007$  & $0.273 $ & $0.273  $ & $0.005 $ & $ 0.150$ & $0.150$ & $ 0.031 $ & $ 0.270$ & $0.271$\\
					& & & $-0.021 $  & $ 0.303$ & $ 0.303$ & $-0.011  $ & $0.162 $ & $0.162 $ & $-0.040 $ & $0.459 $ & $0.461 $\\
					& & $\beta_1 $ & $ -0.003$  & $0.014$ & $ 0.014 $ & $ -0.010 $ & $ 0.021$ & $0.021$ & $-0.074 $ & $0.069 $ & $0.075 $\\
					& &  & $ 0.002$  & $0.014 $ & $ 0.014 $ & $-0.007 $ & $0.021 $ & $0.021 $ & $-0.016  $ & $0.085 $ & $0.086$\\
					& & $\beta_2 $ & $ 0.029$  & $0.050 $ & $ 0.051 $ & $ 0.043$ & $ 0.069$ & $0.071 $ & $0.099 $ & $0.236$ & $0.245$\\
					& &  & $ 0.027$  & $0.051 $ & $ 0.051 $ & $0.042  $ & $0.069 $ & $0.071 $ & $0.074  $ & $0.269 $ & $0.274 $\\
					& & $\beta_3 $ & $ -0.009$  & $0.044 $ & $ 0.044 $ & $ -0.014 $ & $ 0.052$ & $0.052 $ & $-0.034  $ & $0.220$ & $0.221 $\\
					& &  & $ -0.008$  & $0.045 $ & $ 0.045$ & $-0.013  $ & $0.052 $ & $0.052 $ & $-0.028  $ & $0.252 $ & $0.253 $\\
					\cline{2-12}
					& & & & & & & & & && \\[-8pt]
					& $400$& $\gamma_1 $ & $0.029 $  & $0.111 $ & $0.112 $ & $0.010 $ & $ 0.055$ & $ 0.055$ & $ 0.017 $ & $0.107 $ & $ 0.107$\\
					& &  & $0.055 $  & $0.111 $ & $0.114  $ & $0.013 $ & $0.056 $ & $0.056 $ & $0.058  $ & $0.129 $ & $0.133$\\
					& & $\gamma_2 $ & $0.016$  & $0.040 $ & $0.040  $ & $0.006  $ & $ 0.022$ & $0.022 $ & $ 0.061 $ & $ 0.046$ & $0.049 $\\
					& & & $-0.043$  & $ 0.039$ & $ 0.041 $ & $-0.025  $ & $0.022 $ & $0.023 $ & $-0.082  $ & $0.064 $ & $0.071 $\\
					& & $\gamma_3 $ & $-0.030 $  & $0.155 $ & $0.156  $ & $-0.017 $ & $ 0.079$ & $0.080 $ & $ -0.111 $ & $ 0.142$ & $0.154 $\\
					& & & $0.030 $  & $ 0.160$ & $ 0.161 $ & $0.012  $ & $0.084 $ & $0.084 $ & $0.006 $ & $0.163$ & $0.163 $\\
					& & $\gamma_4 $ & $0.010 $  & $0.114 $ & $0.114  $ & $0.015  $ & $ 0.066$ & $0.067 $ & $ 0.030 $ & $ 0.128$ & $0.129$\\
					& & & $-0.006 $  & $ 0.119$ & $ 0.119 $ & $0.007  $ & $0.068 $ & $0.068 $ & $-0.003 $ & $0.167$ & $0.167 $\\
					& & $\beta_1 $ & $ 0.004$  & $0.007 $ & $ 0.007 $ & $ -0.001 $ & $ 0.008$ & $0.008 $ & $-0.037 $ & $0.030 $ & $0.032 $\\
					& &  & $ 0.007$  & $0.007 $ & $ 0.007 $ & $0.001 $ & $0.008$ & $0.008 $ & $0.004  $ & $0.036 $ & $0.036$\\
					& & $\beta_2 $ & $ 0.010$  & $0.024$ & $ 0.024 $ & $ 0.018 $ & $ 0.029$ & $0.029 $ & $0.063 $ & $0.104 $ & $0.108$\\
					& &  & $ 0.009$  & $0.024 $ & $ 0.024 $ & $0.017  $ & $0.029 $ & $0.029 $ & $0.043  $ & $0.112 $ & $0.114 $\\
					& & $\beta_3 $ & $ 0.001$  & $0.021 $ & $ 0.021$ & $ -0.005 $ & $ 0.026$ & $0.026$ & $-0.015  $ & $0.100 $ & $0.100 $\\
					& &  & $ 0.001$  & $0.022 $ & $ 0.022 $ & $-0.005 $ & $0.026 $ & $0.026 $ & $-0.012  $ & $0.112 $ & $0.112$\\
				\end{tabular}
			}
	}}
\end{table}

\begin{table}
	\caption{\label{tab:results2}Bias, variance and MSE of $\hat\gamma$ and $\hat\beta$ for \texttt{smcure} (second rows) and the 2-step approach (first rows) in Models 3-4 for $n=200$.}
	\centering
	%	\addtolength{\tabcolsep}{-2pt}
	\scalebox{0.85}{
		\fbox{
			\begin{tabular}{cccrrrrrrrrr}
				%	\hline
				%			& & & & & & & & & && \\[-10pt]
				%	&&&\multicolumn{9}{c}{Censoring scheme}\\
				%	\cline{4-12}
				& & & & & & & & & && \\[-7pt]
				&&&\multicolumn{3}{c}{Scenario 1}&\multicolumn{3}{c}{Scenario 2}&\multicolumn{3}{c}{Scenario 3}\\
				Mod.&	n & Par. &  Bias & Var. & MSE & Bias & Var. & MSE & Bias & Var. & MSE\\[2pt]
				\hline
				& & & & & & & & & && \\[-8pt]
				3	&	$200$  & $\gamma_1 $ & $-0.056 $  & $0.287 $ & $0.290  $ & $ -0.016$ & $0.123 $ & $0.123 $ & $-0.010  $ & $0.097 $ & $0.097$ \\
				&	&  & $0.431$  & $4.843$ & $5.028  $ & $0.068 $ & $0.144 $ & $ 0.149$ & $0.011  $ & $0.098 $ & $0.099 $\\
				& & $\gamma_2 $ & $0.083 $  & $0.041$ & $ 0.048 $ & $0.043 $ & $ 0.033$ & $ 0.035$ & $ 0.022 $ & $0.030 $ & $0.031 $\\
				& & & $ -0.029$  & $0.071$ & $0.072  $ & $ -0.014 $ & $0.040$ & $0.040 $ & $-0.012  $ & $0.032 $ & $0.032 $\\
				& & $\gamma_3 $ & $-0.049$  & $0.177 $ & $ 0.179$ & $-0.021 $ & $ 0.105$ & $ 0.106$ & $ 0.004$ & $0.093 $ & $0.093 $\\
				& & & $ 0.105$  & $0.268$ & $0.279 $ & $ 0.049 $ & $0.120 $ & $0.123 $ & $0.043 $ & $0.097 $ & $0.099$\\
				& & $\gamma_4 $ & $-0.080 $  & $0.216 $ & $ 0.222 $ & $-0.041  $ & $ 0.131$ & $ 0.133$ & $ -0.047$ & $0.122$ & $0.124 $\\
				& & & $ 0.159$  & $1.233 $ & $1.259  $ & $ 0.046$ & $0.151 $ & $0.154 $ & $0.010  $ & $0.131 $ & $0.131$\\
				& & $\gamma_5 $ & $0.123$  & $0.325 $ & $ 0.340 $ & $0.051 $ & $ 0.158$ & $ 0.161$ & $ 0.021 $ & $0.136 $ & $0.137 $\\
				& & & $- 0.359$  & $4.817 $ & $4.946  $ & $ -0.062 $ & $0.183 $ & $0.187 $ & $-0.036 $ & $0.140 $ & $0.141 $\\
				& & $\beta_1 $ & $0.003 $  & $0.011 $  & $0.011 $ & $0.003 $ & $0.016 $ & $0.016  $ & $0.016 $ & $ 0.025$ & $0.025$\\
				& &  & $0.003 $  & $0.012 $ & $0.012  $ & $0.003  $ & $0.017 $ & $0.017 $ & $ 0.016 $ & $ 0.025$ & $0.025 $\\
				& & $\beta_2 $ & $0.022$  & $0.034
				$  & $0.034  $ & $0.017 $ & $0.045 $ & $0.046  $ & $0.028 $ & $ 0.075$ & $0.076$\\
				& &  & $0.010 $  & $0.036$ & $0.036  $ & $0.008  $ & $0.048 $ & $0.048 $ & $ 0.023$ & $ 0.077$ & $0.077 $\\
				& & $\beta_3 $ & $-0.048 $  & $0.044 $  & $0.046  $ & $-0.042 $ & $0.060 $ & $0.061 $ & $-0.030$ & $ 0.093$ & $0.094$\\
				& &  & $-0.032 $  & $0.048 $ & $0.049 $ & $-0.027  $ & $0.063 $ & $0.064 $ & $ -0.023 $ & $ 0.095$ & $0.095$\\
				\cline{1-12}
				& & & & & & & & & && \\[-8pt]
				$4$&	$200$  & $\gamma_1 $ & $0.086 $  & $0.237 $ & $0.245  $ & $ 0.037 $ & $0.238 $ & $0.239 $ & $0.010  $ & $0.229 $ & $0.229$ \\
				&	 &  & $0.110 $  & $0.273 $ & $0.285  $ & $0.077  $ & $0.311$ & $ 0.317$ & $0.013  $ & $0.310 $ & $0.310$\\
				&	 & $\gamma_2 $ & $0.010 $  & $0.058 $ & $ 0.058$ & $0.042  $ & $ 0.065$ & $ 0.067$ & $ 0.006$ & $0. 065$ & $0.069$\\
				&	 & & $- 0.069$  & $0.064 $ & $0.069  $ & $ -0.091$ & $0.077 $ & $0.085 $ & $-0.073  $ & $0.068 $ & $0.073 $\\
				&	 & $\gamma_3 $ & $-0.025 $  & $0.131 $ & $ 0.132 $ & $-0.018  $ & $ 0.122$ & $ 0.122$ & $ -0.021 $ & $0.122 $ & $0.122 $\\
				&	 & & $ 0.000$  & $0.153 $ & $0.153  $ & $ 0.033 $ & $0.162 $ & $0.164 $ & $0.027  $ & $0.155 $ & $0.156 $\\
				&	 & $\gamma_4 $ & $0.008$  & $0.095 $ & $ 0.095 $ & $0.007 $ & $ 0.108$ & $ 0.109$ & $ -0.007 $ & $0.110 $ & $0.110 $\\
				&	 & & $- 0.025$  & $0.110 $ & $0.111 $ & $ -0.049 $ & $0.138 $ & $0.140 $ & $-0.075  $ & $0.156 $ & $0.161$\\
				&	 & $\gamma_5 $ & $-0.026 $  & $0.212 $ & $ 0.212 $ & $-0.055  $ & $ 0.199$ & $ 0.202$ & $ -0.038 $ & $0.193 $ & $0.194 $\\
				&	 & & $ 0.019$  & $0.245 $ & $0.245 $ & $ 0.020$ & $0.287 $ & $0.287 $ & $0.045 $ & $0.268 $ & $0.270 $\\
				&	 & $\gamma_6 $ & $-0.024 $  & $0.176 $ & $ 0.177$ & $-0.026 $ & $ 0.172$ & $ 0.172$ & $ -0.038 $ & $0.163$ & $0.165 $\\
				&	 & & $ 0.031$  & $0.203 $ & $0.204  $ & $ 0.068$ & $0.213 $ & $0.217 $ & $0.063  $ & $0.209 $ & $0.213 $\\
				&	 & $\beta_1 $ & $0.004 $  & $0.008 $  & $0.008 $ & $0.000 $ & $0.019 $ & $0.019  $ & $-0.001 $ & $ 0.035$ & $0.035$\\
				&	 &  & $0.004 $  & $0.008 $ & $0.008 $ & $0.003  $ & $0.019 $ & $0.019$ & $ 0.001 $ & $ 0.037$ & $0.037$\\
				&	 & $\beta_2 $ & $-0.011$  & $0.019$  & $0.019  $ & $-0.001$ & $0.049$ & $0.049  $ & $0.010 $ & $ 0.095$ & $0.096$\\
				&	 &  & $-0.011 $  & $0.019$ & $0.019  $ & $0.004  $ & $0.053 $ & $0.053 $ & $ 0.021 $ & $ 0.103$ & $0.104 $\\
				&	 & $\beta_3 $ & $0.000 $  & $0.033 $  & $0.033  $ & $0.014 $ & $0.083$ & $0.083  $ & $0.021 $ & $ 0.144$ & $0.144$\\
				&	 &  & $-0.001 $  & $0.033 $ & $0.033  $ & $0.006  $ & $0.088$ & $0.088$ & $ 0.007 $ & $ 0.155$ & $0.155$\\
			\end{tabular}
	}}
\end{table}
\begin{figure}
	\centering
	\makebox{
		\includegraphics[width=0.48\linewidth]{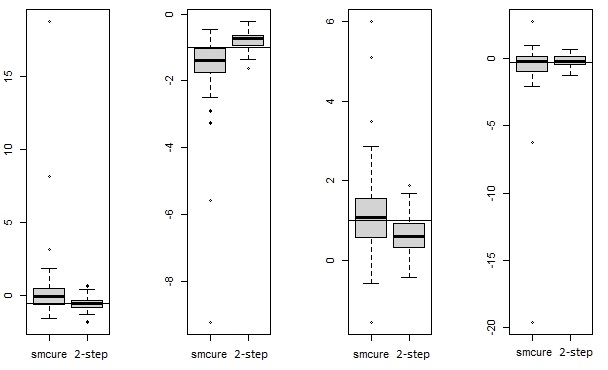}\qquad\includegraphics[width=0.48\linewidth]{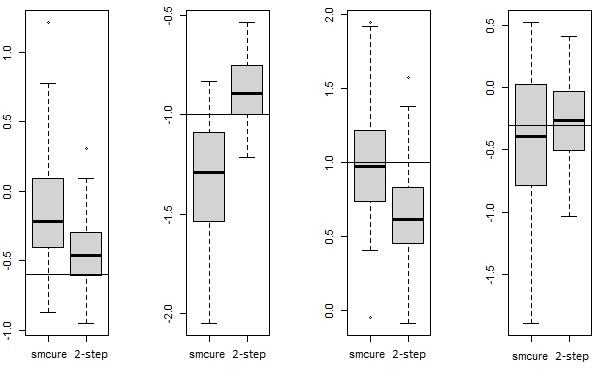}}
	\caption{\label{fig:non_conv}Boxplots of estimates of the four $\gamma$ parameters in Model 2, scenario 3 for the iterations in which \texttt{smcure} does not converge. The horizontal lines correspond to the true values of the parameters. Left panel: for $n=200$. Right panel: for $n=400$.}
\end{figure}
\begin{table}
	\caption{\label{tab:results3}Bias, variance and MSE of $\hat\gamma$ and $\hat\beta$ for \texttt{smcure} (second rows) and the 2-step approach (first rows) in Models 3-4 for $n=400$.}
	\centering
	%	\addtolength{\tabcolsep}{-2pt}
	\scalebox{0.85}{
		\fbox{
			\begin{tabular}{cccrrrrrrrrr}
				%	\hline
				%			& & & & & & & & & && \\[-10pt]
				%	&&&\multicolumn{9}{c}{Censoring scheme}\\
				%	\cline{4-12}
				& & & & & & & & & && \\[-7pt]
				&&&\multicolumn{3}{c}{Scenario 1}&\multicolumn{3}{c}{Scenario 2}&\multicolumn{3}{c}{Scenario 3}\\
				Mod.&	n & Par. &  Bias & Var. & MSE & Bias & Var. & MSE & Bias & Var. & MSE\\[2pt]
				\hline
				& & & & & & & & & && \\[-8pt]
				3	&	$400$  & $\gamma_1 $ & $-0.045 $  & $0.149 $ & $0.151  $ & $ -0.009$ & $0.069 $ & $0.070$ & $-0.020  $ & $0.046 $ & $0.047$ \\
				&	&  & $0.124$  & $0.213$ & $0.228  $ & $0.043 $ & $0.071 $ & $ 0.073$ & $-0.003  $ & $0.045 $ & $0.045 $\\
				& & $\gamma_2 $ & $0.063 $  & $0.021$ & $ 0.025 $ & $0.034 $ & $ 0.016$ & $ 0.017$ & $ 0.023 $ & $0.015 $ & $0.015 $\\
				& & & $ -0.018$  & $0.030$ & $0.030  $ & $ -0.008 $ & $0.018$ & $0.018 $ & $-0.003  $ & $0.016 $ & $0.016 $\\
				& & $\gamma_3 $ & $-0.041$  & $0.098$ & $ 0.099$ & $-0.025 $ & $ 0.058$ & $ 0.059$ & $ 0.001$ & $0.050 $ & $0.050 $\\
				& & & $ 0.052$  & $0.116$ & $0.119 $ & $ 0.019 $ & $0.061 $ & $0.062 $ & $0.026 $ & $0.052 $ & $0.052$\\
				& & $\gamma_4 $ & $-0.088 $  & $0.111 $ & $ 0.119 $ & $-0.052  $ & $ 0.069$ & $ 0.071$ & $ -0.020$ & $0.059$ & $0.059 $\\
				& & & $ 0.029$  & $0.149 $ & $0.149  $ & $ 0.007$ & $0.076$ & $0.076$ & $0.019  $ & $0.060 $ & $0.060$\\
				& & $\gamma_5 $ & $0.103$  & $0.165 $ & $ 0.175 $ & $0.047 $ & $ 0.083$ & $ 0.085$ & $ 0.033 $ & $0.068 $ & $0.069 $\\
				& & & $- 0.079$  & $0.226 $ & $0.232 $ & $ -0.030$ & $0.085 $ & $0.086 $ & $-0.013$ & $0.066 $ & $0.067 $\\
				& & $\beta_1 $ & $-0.001 $  & $0.005$  & $0.005 $ & $0.001 $ & $0.007 $ & $0.007  $ & $0.003$ & $ 0.010$ & $0.010$\\
				& &  & $-0.001 $  & $0.005$ & $0.005  $ & $0.001  $ & $0.007 $ & $0.007 $ & $ 0.003 $ & $ 0.010$ & $0.010 $\\
				& & $\beta_2 $ & $0.007$  & $0.017
				$  & $0.017  $ & $0.006 $ & $0.023 $ & $0.023  $ & $0.002 $ & $ 0.034$ & $0.034$\\
				& &  & $0.000 $  & $0.018$ & $0.018  $ & $0.001 $ & $0.024 $ & $0.024 $ & $ -0.001$ & $ 0.035$ & $0.035 $\\
				& & $\beta_3 $ & $-0.017 $  & $0.021 $  & $0.021  $ & $-0.012 $ & $0.028 $ & $0.028 $ & $-0.007$ & $ 0.042$ & $0.042$\\
				& &  & $-0.002 $  & $0.022 $ & $0.022 $ & $-0.003 $ & $0.029 $ & $0.029 $ & $ -0.002 $ & $ 0.043$ & $0.043$\\
				\cline{1-12}
				& & & & & & & & & && \\[-8pt]
				$4$&	$400$  & $\gamma_1 $ & $0.042 $  & $0.115$ & $0.116  $ & $ 0.016 $ & $0.105 $ & $0.105 $ & $0.003  $ & $0.118 $ & $0.118$ \\
				&	 &  & $0.056 $  & $0.124 $ & $0.127  $ & $0.036  $ & $0.119$ & $ 0.121$ & $0.007  $ & $0.125 $ & $0.125$\\
				&	 & $\gamma_2 $ & $0.025 $  & $0.027$ & $ 0.027$ & $0.045  $ & $ 0.036$ & $ 0.038$ & $ 0.047$ & $0. 036$ & $0.039$\\
				&	 & & $- 0.029$  & $0.028 $ & $0.029  $ & $ -0.040$ & $0.032 $ & $0.034 $ & $-0.038  $ & $0.032 $ & $0.034 $\\
				&	 & $\gamma_3 $ & $0.002 $  & $0.064$ & $ 0.064 $ & $-0.022 $ & $ 0.062$ & $ 0.062$ & $ -0.018 $ & $0.063 $ & $0.063 $\\
				&	 & & $ 0.024$  & $0.072 $ & $0.072  $ & $ 0.009 $ & $0.073 $ & $0.073 $ & $0.012  $ & $0.074 $ & $0.074$\\
				&	 & $\gamma_4 $ & $0.010$  & $0.045 $ & $ 0.045 $ & $0.016 $ & $ 0.054$ & $ 0.054$ & $ 0.018 $ & $0.061 $ & $0.061 $\\
				&	 & & $- 0.014$  & $0.049 $ & $0.049 $ & $ -0.019 $ & $0.058 $ & $0.059 $ & $-0.018  $ & $0.069 $ & $0.070$\\
				&	 & $\gamma_5 $ & $0.000 $  & $0.101 $ & $ 0.101 $ & $-0.019  $ & $ 0.097$ & $ 0.097$ & $ -0.012 $ & $0.107 $ & $0.108 $\\
				&	 & & $ 0.034$  & $0.111 $ & $0.112 $ & $ 0.033$ & $0.111 $ & $0.112 $ & $0.038 $ & $0.119 $ & $0.120 $\\
				&	 & $\gamma_6 $ & $-0.039 $  & $0.084 $ & $ 0.086$ & $-0.045 $ & $ 0.081$ & $ 0.084$ & $ -0.041 $ & $0.090$ & $0.092 $\\
				&	 & & $- 0.002$  & $0.094 $ & $0.094 $ & $ 0.015$ & $0.091 $ & $0.092 $ & $0.018  $ & $0.098 $ & $0.098 $\\
				&	 & $\beta_1 $ & $0.00 0$  & $0.004 $  & $0.004 $ & $0.003 $ & $0.009 $ & $0.009  $ & $-0.002 $ & $ 0.015$ & $0.015$\\
				&	 &  & $0.000 $  & $0.004 $ & $0.004 $ & $0.005  $ & $0.009 $ & $0.009$ & $ 0.001 $ & $ 0.015$ & $0.015$\\
				&	 & $\beta_2 $ & $-0.003$  & $0.009$  & $0.009  $ & $-0.007$ & $0.023$ & $0.023  $ & $-0.011 $ & $ 0.045$ & $0.045$\\
				&	 &  & $-0.002 $  & $0.009$ & $0.009  $ & $-0.004  $ & $0.024 $ & $0.024 $ & $ -0.006 $ & $ 0.047$ & $0.047 $\\
				&	 & $\beta_3 $ & $0.005 $  & $0.016 $  & $0.017 $ & $0.012 $ & $0.039$ & $0.039  $ & $0.015$ & $ 0.072$ & $0.072$\\
				&	 &  & $0.004 $  & $0.017 $ & $0.017  $ & $0.006  $ & $0.040$ & $0.040$ & $ 0.006 $ & $ 0.074$ & $0.074$\\
			\end{tabular}
	}}
\end{table}

Simulations show that the 2-step approach improves considerably upon \texttt{smcure} for estimation of $\gamma$ when $n=200$ and the censoring rate among the uncured observations is higher. In almost all scenarios the 2-step approach has a smaller variance, which is expected due to presmoothing, but it also often exhibits a lower bias. As the sample size increases or the censoring rate decreases, we see less difference between the two methods but still, the 2-step approach is usually better.  In terms of $\beta$ estimators, both approaches give very similar results. In addition, the boxplots in Figure~\ref{fig:non_conv}, indicate that, even when \texttt{smcure} does not converge, the 2-step approach still gives more reasonable estimates. 

Since the second step of the new method does not depend on the latency model, we expect it to be more stable than \texttt{smcure} with respect to misspecifications of the latency model. We investigate this issue by considering two additional settings: one corresponding to a non-Cox latency model (Model 5 below) for which we still apply the two methods as if the Cox model was true and one corresponding to a logistic-Cox model but in which we don't use the correct covariates. For the latter, we use Model 4, scenario 2 described above but fit a latency model with covariates $X_1,\dots, X_5$ instead of $Z_1,Z_2,Z_3$. In particular, this means that we are including covariates $X_1,X_2,X_5$ that actually do not have any effect and are excluding $Z_1$ which affects the survival of the uncured.

\textit{Model 5.} Both incidence and latency depend on three independent covariates $X_1=Z_1\sim \text{Unif}(-1,1)$, $X_2=Z_2\sim \text{Bernoulli}(0.4)$ and $X_3=Z_3\sim \text{Bernoulli}(0.6)$. The cure status  and the survival times for the uncured observations are generated as in Model 1 for $\gamma=(1.4,2,1,-1) $, $\rho$ depending on $z$, $\rho(z)=0.75+\exp(\beta^Tz)$, $\mu=1.5$, $\beta=(1,0.4,-0.6)$. In particular this means that the latency model does not satisfy the proportional hazards assumption. For an observation with covariate $z$, the event time is truncated at $\tau_0(z)$ equal to the $97\%$ quantile of the Weibull distribution with parameters $\rho(z)$ and $\mu^{-1/\rho(z)}$. The censoring times are  generated according to a Weibull proportional hazards model
\[
S_C(t|x)=\exp\left(-\frac{1}{22}\mu t^{\tilde\rho}\exp\left(\gamma^Tx\right)\right),
\] 
with $\mu=1.5$ and $\tilde{\rho}=2.5$, truncated at $\tau=\max_z \tau_0(z)+2$. This scenario corresponds to a cure rate of $30\%$ and a censoring rate of $45\%$. 

Results for sample size $200$ and $400$, reported in Table~\ref{tab:results_mis1}, show that when the true latency model is not a Cox proportional hazards model, even $\gamma$ estimates are biased. However, the 2-step approach has lower bias and MSE, hence suffers less from the misspecification of the latency. On the other hand, misspecification of the latency covariates when the model is still Cox, seems to be less critical. It leads to a slight increase in bias and variance compared to the results in Table~\ref{tab:results2}-\ref{tab:results3} but again the 2-step approach performs better. 

\begin{table}
	\caption{\label{tab:results_mis1}Bias, variance and MSE of $\hat\gamma$ and $\hat\beta$ for \texttt{smcure} (second rows) and the 2-step approach (first rows) in Model 5 and Model 4 with misspecification.}
	\centering
	%	\addtolength{\tabcolsep}{-2pt}
	\scalebox{0.85}{
		\fbox{
			\begin{tabular}{ccrrrrrr}
				%	\hline
				%			& & & & & & & & & && \\[-10pt]
				%	&&&\multicolumn{9}{c}{Censoring scheme}\\
				%	\cline{4-12}
				& & & & & & &  \\[-7pt]
				&&\multicolumn{3}{c}{$n=200$}&\multicolumn{3}{c}{$n=400$}\\
				Mod.&	 Par. &  Bias & Var. & MSE & Bias & Var. & MSE \\[2pt]
				\hline
				& & & & & & & \\[-8pt]
				5	  & $\gamma_1 $ & $-0.154 $  & $0.187 $ & $0.211  $ & $ -0.087$ & $0.100 $ & $0.107 $ \\
				&	 & $0.514$  & $0.393$ & $0.657  $ & $0.437 $ & $0.152$ & $ 0.343$ \\
				&  $\gamma_2 $ & $-0.304 $  & $0.277$ & $ 0.370 $ & $0.213 $ & $ 0.144$ & $ 0.189$ \\
				& &  $ 0.623$  & $0.464$ & $0.852  $ & $ 0.507$ & $0.184$ & $0.441 $\\
				&  $\gamma_3 $ & $-0.232$  & $0.203 $ & $ 0.257$ & $-0.166 $ & $ 0.108$ & $ 0.136$ \\
				& &  $ 0.297$  & $0.655$ & $0.744 $ & $ 0.235 $ & $0.176 $ & $0.231 $ \\
				&  $\gamma_4 $ & $0.207 $  & $0.188 $ & $ 0.231 $ & $0.132  $ & $ 0.112$ & $ 0.130$ \\
				& &  $ -0.326$  & $0.377 $ & $0.483  $ & $ -0.281$ & $0.166 $ & $0.245 $ \\
				& $\beta_1 $ & $-0.281 $  & $0.051 $  & $0.130$ & $-0.296$ & $0.025$ & $0.112  $ \\
				&  & $-0.354 $  & $0.056 $ & $0.181 $ & $-0.353  $ & $0.027 $ & $0.151$ \\
				&  $\beta_2 $ & $-0.137$  & $0.043
				$  & $0.062 $ & $-0.144 $ & $0.022 $ & $0.043  $\\
				&   & $-0.174 $  & $0.046$ & $0.076  $ & $-0.169 $ & $0.023 $ & $0.052$ \\
				&  $\beta_3 $ & $0.221 $  & $0.051$  & $0.099  $ & $0.210$ & $0.023 $ & $0.067 $ \\
				& &   $0.257$  & $0.054 $ & $0.120 $ & $0.235  $ & $0.024 $ & $0.079$ \\
				\cline{1-8}
				& & & & & & & \\[-8pt]
				$4$ & $\gamma_1 $ & $0.036 $  & $0.250 $ & $0.251 $ & $ 0.016 $ & $0.116 $ & $0.116$  \\
				&	  & $0.097 $  & $0.360 $ & $0.370  $ & $0.041  $ & $0.133$ & $ 0.135$ \\
				&	  $\gamma_2 $ & $0.059 $  & $0.066$ & $ 0.070$ & $0.050  $ & $ 0.036$ & $ 0.039$ \\
				&	  & $- 0.089$  & $0.092$ & $0.100 $ & $ -0.038$ & $0.038 $ & $0.040 $ \\
				&	 $\gamma_3 $ & $-0.029 $  & $0.150 $ & $ 0.151 $ & $-0.023  $ & $ 0.075$ & $ 0.075$\\
				&	  & $ 0.025$  & $0.222 $ & $0.223  $ & $ 0.008 $ & $0.091$ & $0.091 $\\
				&	  $\gamma_4 $ & $0.005$  & $0.109 $ & $ 0.109$ & $0.018 $ & $ 0.055$ & $ 0.055$\\
				&	  & $- 0.059$  & $0.149 $ & $0.152 $ & $ -0.018 $ & $0.061$ & $0.062 $\\
				&	  $\gamma_5 $ & $-0.057 $  & $0.204$ & $ 0.207 $ & $-0.022 $ & $ 0.101$ & $ 0.102$\\
				&	  & $ 0.022$  & $0.307 $ & $0.308 $ & $ 0.032$ & $0.120 $ & $0.121$\\
				&	  $\gamma_6 $ & $-0.029 $  & $0.204 $ & $ 0.205$ & $-0.047 $ & $ 0.103$ & $ 0.105$\\
				&	 & $ 0.075$  & $0.287 $ & $0.293 $ & $ 0.014$ & $0.122 $ & $0.122 $\\
				&	  $\beta_1 $ & $-0.025 $  & $0.028 $  & $0.029 $ & $-0.017 $ & $0.011 $ & $0.012  $\\
				&	   & $-0.004 $  & $0.030 $ & $0.030 $ & $-0.002  $ & $0.012 $ & $0.012$\\
				&	  $\beta_2 $ & $0.017$  & $0.068$  & $0.069  $ & $0.007$ & $0.032$ & $0.032  $\\
				&	   & $0.010$  & $0.074$ & $0.074  $ & $0.002  $ & $0.033 $ & $0.033 $ \\
				&	  $\beta_3 $ & $0.004 $  & $0.054 $  & $0.054  $ & $-0.001 $ & $0.024$ & $0.024$\\
				&	   & $0.011 $  & $0.058 $ & $0.059  $ & $0.002  $ & $0.025$ & $0.025$ \\
				&	  $\beta_4 $ & $0.013$  & $0.094$  & $0.095  $ & $0.012$ & $0.043$ & $0.043  $\\
				&	   & $0.004$  & $0.103$ & $0.103 $ & $0.005  $ & $0.044 $ & $0.044 $ \\
				&	  $\beta_5 $ & $-0.001 $  & $0.112 $  & $0.112 $ & $0.011 $ & $0.047$ & $0.047$\\
				&	   & $-0.019 $  & $0.119 $ & $0.119  $ & $0.000  $ & $0.049$ & $0.049$ \\
			\end{tabular}
	}}
\end{table}

\section{Application}
\label{sec:application}
In this section we illustrate the practical use of the method through two medical datasets for melanoma cancer patients and compare the results with those provided by the \texttt{smcure} package. Melanoma is a common skin cancer type for which nowadays it is expected that a considerable fraction of the patients get cured as a consequence of medical advances in diagnostics and treatment.  Therefore, it is important to account for the presence of cured patients in the statistical analysis of melanoma survival data and to evaluate new treatments focusing on cure and not only survival prolongation. 

\subsection{Eastern Cooperative Oncology Group (ECOG) Data}
The ECOG phase III clinical trial e1684 aimed at evaluating the effect of treatment (high dose interferon alpha-2b regimen) as the postoperative adjuvant therapy for melanoma patients. The corresponding dataset, consisting of $284$ observations (after deleting missing data), is available in the \texttt{smcure} package \cite{cai_smcure}. The event time is the time from initial treatment to recurrence of melanoma and three covariates have been considered: age (continuous variable centered to the mean), gender (0=male and 1=female) and treatment (0=control and 1=treatment). Around $30\%$ of the observations are censored. The Kaplan-Meier curve is shown in Figure~\ref{fig:KM_melanoma_1}. 

We fit a logistic-Cox mixture cure model by using the maximum likelihood principle (\texttt{smcure} package) and the proposed 2-step approach. For our method we use the \texttt{smcure} estimator as a preliminary estimator. In both cases, standard errors are computed through  $500$ {naive} bootstrap samples. 
The resulting parameter estimates, standard errors and corresponding p-values for the Wald test are reported  in Table~\ref{tab:melanoma1}. 

\begin{figure}
	\centering
	\makebox{
		\includegraphics[width=0.48\linewidth]{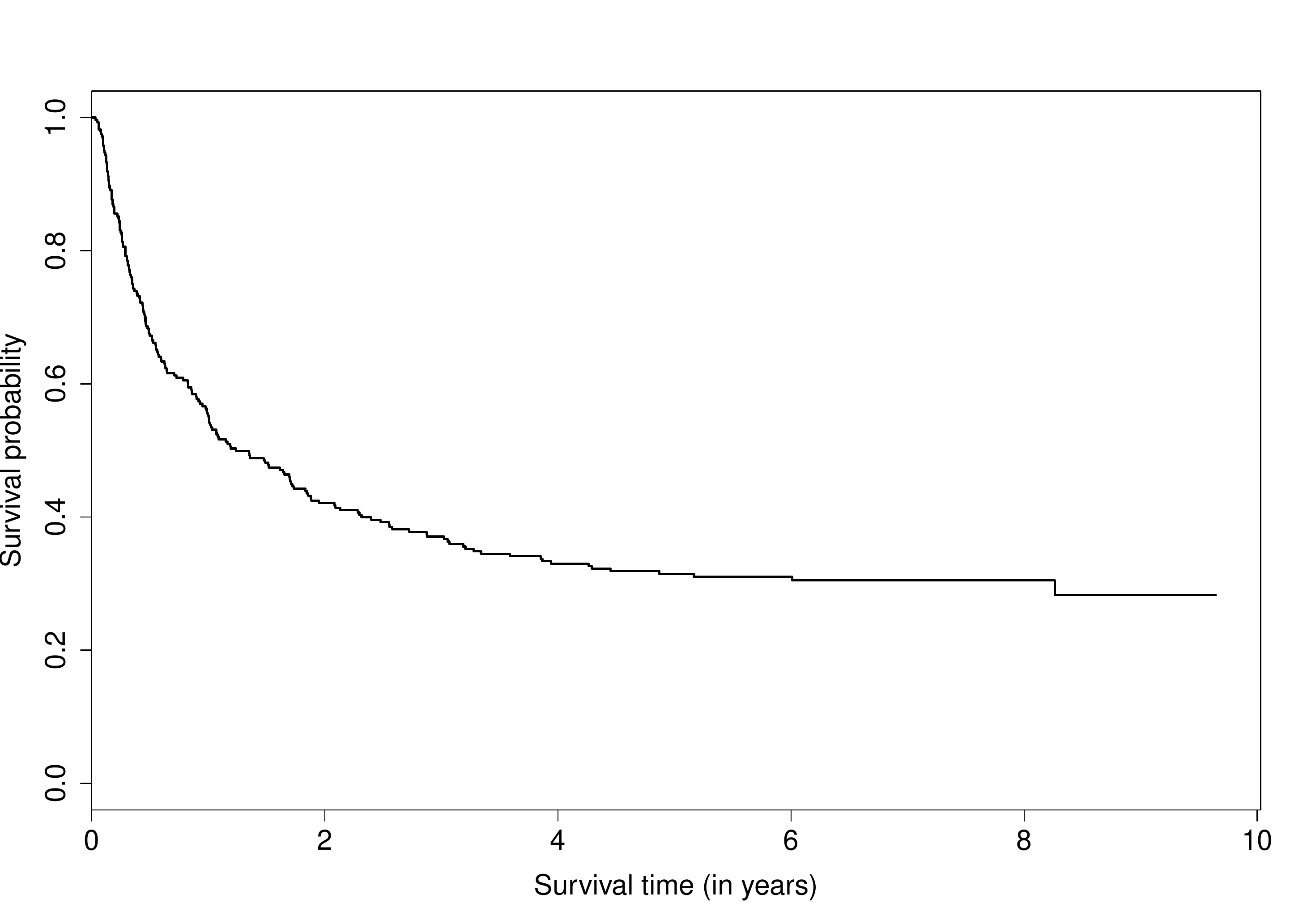}\qquad\includegraphics[width=0.48\linewidth]{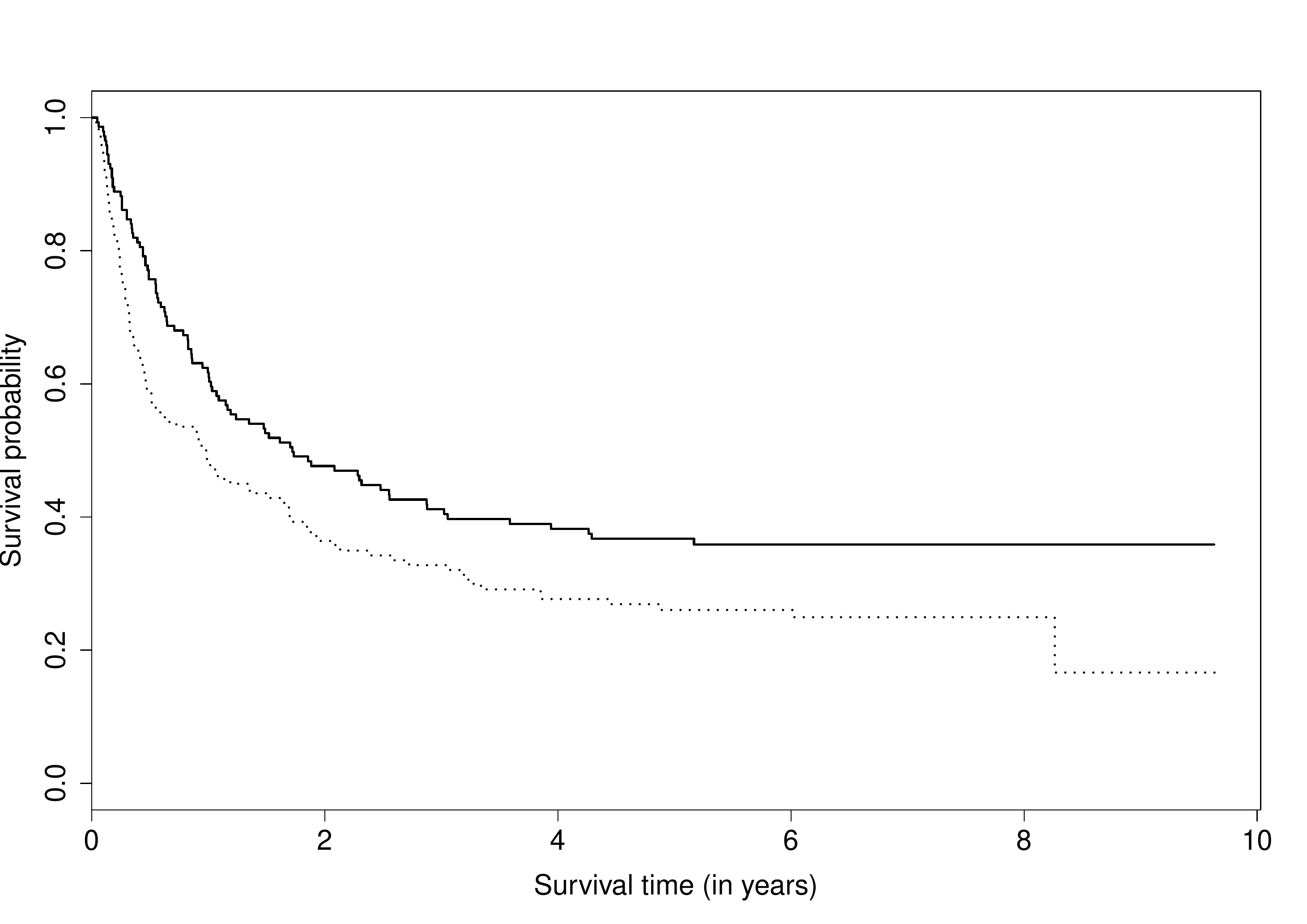}}
	\caption{\label{fig:KM_melanoma_1}Left panel: Kaplan-Meier survival curve for ECOG data. Right panel: Kaplan-Meier survival curves for the treatment group (solid) and control group (dotted) in the ECOG data.}
\end{figure}

\begin{table}
	\caption{\label{tab:melanoma1}Results for the incidence (logistic component) and the latency (Cox PH component) from the ECOG data.}
	\centering
	\scalebox{0.85}{
		\fbox{
			\begin{tabular}{c|crrrrrr}
				%	\hline
				&& \multicolumn{3}{c}{}  & \multicolumn{3}{c}{} \\[-8pt]
				&	& \multicolumn{3}{c}{\texttt{smcure} package}   & \multicolumn{3}{c}{2-step approach}\\
				&	Covariates	& Estimates & SE & p-value & Estimates & SE & p-value\\[2pt]
				\cline{2-8}
				&	& & & & & & \\[-8pt]
				\multirow{4}{*}{\STAB{\rotatebox[origin=c]{90}{incidence}}}	&Intercept & $1.3649 $ & $0.3457 $ & $8\cdot10^{-5} $   & $1.8351$ & $ 0.4924$ & $0.0002$\\
				&	Age & $0.0203 $ & $0.0159 $ & $0.2029 $   & $0.0388 $ & $0.0191 $ & $0.0418 $\\
				&	Gender & $-0.0869 $ & $0.3347 $ & $0.7949 $  & $-0.0864 $ & $0.3447 $ & $ 0.8126$\\
				&	Treatment & $-0.5884 $ & $ 0.3706$ & $0.1123 $   & $-1.1096$ & $0.5283$ & $0.0357 $\\
				\hline
				&	& & & & & &   \\[-8pt]
				\multirow{3}{*}{\STAB{\rotatebox[origin=c]{90}{latency}}}	&	Age & $-0.0077 $ & $0.0069 $ & $0.2663 $  &  $-0.0103$ & $0.0068 $ & $0.1319 $\\
				&	Gender & $0.0994 $ & $0.1932 $ & $0.6067 $   & $0.0629 $ & $0.1831 $ & $0.7313$\\
				&	Treatment & $-0.1535 $ & $0.1715 $ & $0.3707 $  & $-0.0526 $ & $ 0.1904$ & $0.7825 $\\
				%	\hline
	\end{tabular}}}
\end{table}

We observe that, despite exhibiting the same effect directions for all covariates, the two approaches give quite different results in terms of treatment effect. Age and treatment are both found to have a significant effect on the cure fraction when using the 2-step method, while \texttt{smcure} does not detect any significant effect. We also compare the two methods in terms of prediction accuracy for the incidence in the following way. As in \cite{AKL19}, we split the data into a training and a test set (at a 2:1 ratio), fit the model in the training set and then compute the prediction error for the test set according to the formula
\[
PE=-\sum_{j\in\text{test set}}\log \left[\phi\left(\hat\gamma_n^TX_j\right)^{\hat{w}_j}\{1-\phi\left(\hat\gamma_n^TX_j\right)\}^{1-\hat{w}_j}\right],
\]
where $\hat\gamma_n$ are the parameter estimates from the training set and $\hat{w}_j$ are the predicted uncure probabilities given the observations, i.e. 
\[
\hat{w}_j=\Delta_j+(1-\Delta_j)\frac{\phi(\hat\gamma_n^TX_j)\hat{S}_u(Y_j\mid  Z_j)}{1-\phi(\hat\gamma^T_nX_j)+\phi(\hat\gamma^T_nX_j)\hat{S}_u(Y_j\mid  Z_j)},
\]
We repeat this procedure 1000 times, for random selection of the train and test set. The boxplot of the difference between  the PE of the new method and the PE of \texttt{smcure}, over these 1000 iteration, is given in Figure~\ref{fig:boxplot_smcure}. We observe that the 2-step approach leads to lower PE (negative difference) in more that $50\%$ of the cases and the improvement in PE for the new method is usually larger compared to the cases in which \texttt{smcure} does better.

In addition, we expect the new approach to be more stable with respect to the latency model since that does not influence the second step of the estimation. To illustrate this point, we also fit a cure model with only gender as covariate for the survival of uncured patients (see Table~\ref{tab:melanoma3}) and see that in that case, \texttt{smcure} also detects the effect of the treatment to be significant. 

\begin{figure}
	\centering
	\makebox{
		\includegraphics[width=0.48\linewidth]{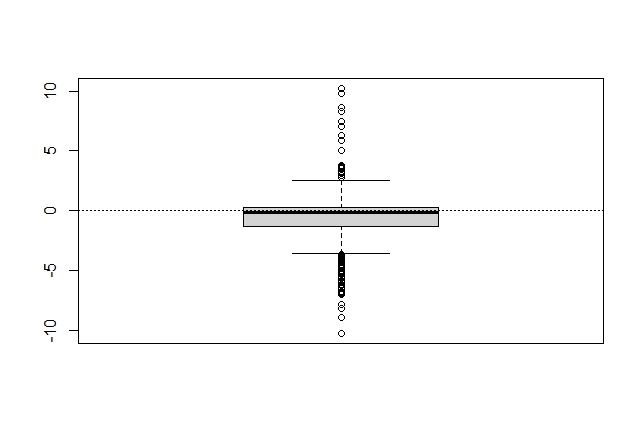}}
	\caption{\label{fig:boxplot_smcure}Boxplot of the difference between  the PE of the new method and the PE of \texttt{smcure}, over these 1000 iterations for the ECOG data.}
\end{figure}
\begin{table}
	\caption{\label{tab:melanoma3}Results for the incidence (logistic component) and the latency (Cox PH component) from the ECOG data.}
	\centering
	\scalebox{0.85}{
		\fbox{
			\begin{tabular}{c|crrrrrr}
				%	\hline
				&& \multicolumn{3}{c}{}  & \multicolumn{3}{c}{} \\[-8pt]
				&	& \multicolumn{3}{c}{\texttt{smcure} package}   & \multicolumn{3}{c}{2-step approach}\\
				&	Covariates	& Estimates & SE & p-value & Estimates & SE & p-value\\[2pt]
				\cline{2-8}
				&	& & & & & & \\[-8pt]
				\multirow{4}{*}{\STAB{\rotatebox[origin=c]{90}{incidence}}}	&Intercept & $1.4000 $ & $0.2791 $ & $5\cdot10^{-7} $   & $1.9073$ & $ 0.5225$ & $0.0002$\\
				&	Age & $0.0165$ & $0.0121 $ & $0.1709 $   & $0.0357$ & $0.0174 $ & $0.0399$\\
				&	Gender & $-0.0538 $ & $0.3101 $ & $0.8623$  & $-0.0429 $ & $0.3979$ & $ 0.9141$\\
				&	Treatment & $-0.6765$ & $ 0.3118$ & $0.0300$   & $-1.2590$ & $0.5809$ & $0.0302 $\\
				\hline
				&	& & & & & &   \\[-8pt]
				%	\multirow{1}{*}{\STAB{\rotatebox[origin=c]{90}{latency}}}
				latency	&	Gender & $0.0637 $ & $0.1935$ & $0.7421 $   & $0.0235$ & $0.1924 $ & $0.9028$\\
				%	\hline
	\end{tabular}}}
\end{table}

\subsection{Surveillance, Epidemiology and End Results database}

Here we consider melanoma data extracted from the SEER database to illustrate the performance of the method for more than one continuous covariate. The SEER database collects cancer incidence data from population-based cancer registries in US. We select the database `Incidence - SEER 18 Regs Research Data' and, in order to have a reasonable sample size, we extract the melanoma cancer data for the county of San Francisco in California during the period $2005-2010$. We consider only patients with known follow-up time  and tumor size (in the range 1-90 mm) and restrict the study to white people because of the very small number of cases from other races. The event of interest is death because of melanoma.  This cohort consists of $384$ observations out of which $228$ are male. The age ranges from $23$ to $101$ years old, the follow-up from $1$ to $143$ months with no events observed after $108$ months. Because of the high expected cure rate, $89\%$ of the observations are censored. We consider as covariates in the model: gender (0=male, 1=female), age and tumor size (continuous). The use of cure models is justified from the presence of a long plateau containing around $25\%$ of the observations (see the Kaplan-Meier curve in Figure \ref{fig:KM_melanoma_2}). 

\begin{figure}
	%	\makebox{
	\centering
	{\includegraphics[width=0.48\linewidth]{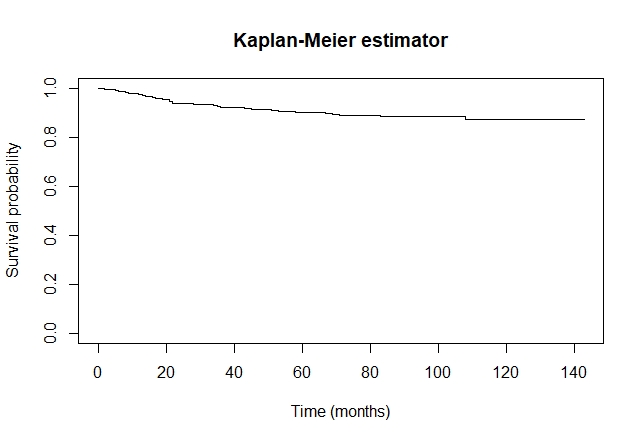}
	}
	%	}
	\caption{\label{fig:KM_melanoma_2}
		Kaplan-Meier survival curve for the SEER data. 
	}
\end{figure}

As in the previous section, we compute parameter estimates, standard errors and corresponding p-values for both methods (see Table \ref{tab:melanoma2}). We observe that both methods agree on the directions of the effects and give similar parameter estimates. Note that the sample size in this case is larger than in the previous  data example. However, \texttt{smcure} only finds age to be significant while the 2-step approach also detects the tumor size.  For the latency, none of the covariates is found significant with both methods. 
Also in this case, most of the time the new method leads to an improvement in terms of prediction errors, computed according to the procedure described in the previous subsection (see Figure~\ref{fig:boxplot_seer}). In addition, we also observe that if we remove the covariate tumor size from the latency model, the 2-step approach gives similar results while this time \texttt{smcure} also detects  tumor size as significant for the incidence component (see Table~\ref{tab:melanoma2_sub}). Once more, this behavior reflects the strong dependence of the incidence estimates on the latency model for \texttt{smcure}.  
\begin{table}
	\caption{\label{tab:melanoma2}
		Results for the incidence (logistic component) and the latency (Cox PH component) from the  SEER data.}
	\centering
	\scalebox{0.85}{
		\fbox{
			\begin{tabular}{c|crrrrrr}
				%	\hline
				&& \multicolumn{3}{c}{}  & \multicolumn{3}{c}{} \\[-8pt]
				&	& \multicolumn{3}{c}{\texttt{smcure} package}   & \multicolumn{3}{c}{2-step approach}\\
				&	Covariates	& Estimates & SE & p-value & Estimates & SE & p-value\\[2pt]
				\cline{2-8}
				&	& & & & & & \\[-8pt]
				\multirow{4}{*}{\STAB{\rotatebox[origin=c]{90}{incidence}}}&	Intercept & $-4.4952 $ & $0.9670$ & $3\cdot10^{-6} $  & $-4.9634$ & $1.1269 $ & $10^{-5}$ \\
				&	Age & $0.0411$ & $0.0142 $ & $0.0037 $  & $0.0455$ & $0.0167 $ & $ 0.0063$ \\
				&	Gender & $ -0.6805$ & $0.4551 $ & $0.1349 $  & $-0.7902 $ & $0.5162$ & $0.1259$ \\
				&Tumor size& $0.0214 $ & $0.0136 $ & $0.1147 $  & $ 0.0295$ & $0.0131 $ & $ 0.0239$ \\
				\hline
				&	& & & & & &  \\[-8pt]
				\multirow{3}{*}{\STAB{\rotatebox[origin=c]{90}{latency}}}&	Age & $-0.0161 $ & $ 0.0152$ & $ 0.2874$  & $-0.0171$ & $0.0151 $ & $0.2572$ \\
				&	Gender & $-0.3599 $ & $0.4217 $ & $ 0.3934$  & $-0.3125 $ & $0.4184 $ & $0.4551 $ \\
				&Tumor size& $0.0308 $ & $0.0224 $ & $0.1695$  & $0.0282 $ & $0.0217 $ & $0.1927 $ \\
				%	\hline
			\end{tabular}
	}}
\end{table}

\begin{figure}
	\centering
	\makebox{
		\includegraphics[width=0.48\linewidth]{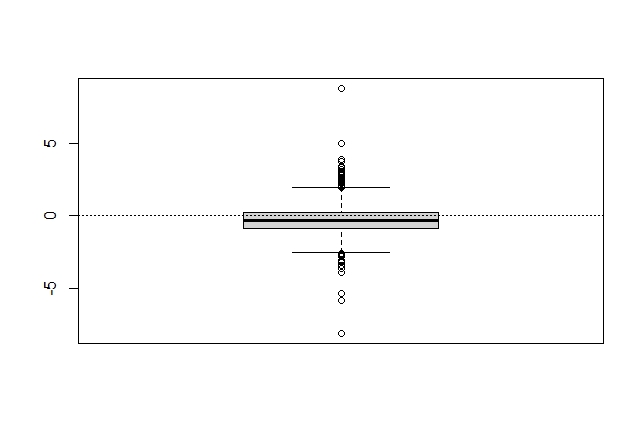}}
	\caption{\label{fig:boxplot_seer}Boxplot of the difference between  the PE of the new method and the PE of \texttt{smcure}, over these 1000 iterations for the SEER data.}
\end{figure}
\begin{table}
	\caption{\label{tab:melanoma2_sub}
		Results for the incidence (logistic component) and the latency (Cox PH component) from the  SEER data.}
	\centering
	\scalebox{0.85}{
		\fbox{
			\begin{tabular}{c|crrrrrr}
				%	\hline
				&& \multicolumn{3}{c}{}  & \multicolumn{3}{c}{} \\[-8pt]
				&	& \multicolumn{3}{c}{\texttt{smcure} package}   & \multicolumn{3}{c}{2-step approach}\\
				&	Covariates	& Estimates & SE & p-value & Estimates & SE & p-value\\[2pt]
				\cline{2-8}
				&	& & & & & & \\[-8pt]
				\multirow{4}{*}{\STAB{\rotatebox[origin=c]{90}{incidence}}}&	Intercept & $-4.7297 $ & $0.9213$ & $2\cdot10^{-7} $  & $-4.5477$ & $1.0829 $ & $2\cdot 10^{-5}$ \\
				&	Age & $0.0423$ & $0.0137$ & $0.0020 $  & $0.0392$ & $0.0159 $ & $ 0.0134$ \\
				&	Gender & $ -0.6393$ & $0.4579 $ & $0.1627$  & $-0.7021$ & $0.4430$ & $0.1130$ \\
				&Tumor size& $0.0288 $ & $0.0129 $ & $0.0251$  & $ 0.0250$ & $0.0129 $ & $ 0.0529$ \\
				\hline
				&	& & & & & &  \\[-8pt]
				latency&	Age & $-0.0165 $ & $ 0.0147$ & $ 0.2626$  & $-0.0145$ & $0.0146 $ & $0.3201$ \\
				&	Gender & $-0.3208 $ & $0.4282 $ & $ 0.4538$  & $-0.2559 $ & $0.4092 $ & $0.5316 $ \\
			\end{tabular}
	}}
\end{table}

\appendix
\section*{Appendix}
\section{Proofs}
\label{sec:proofs}
\begin{proof}[Proof of Theorem~\ref{theo:1}] We start by showing that, when $\tilde{\gamma}_n-\gamma_0=o_P(1)$, we have \begin{equation}
		\label{eqn:unif_cons}
		\sup_{x\in\tilde\X}\left|\hat\pi(x)-\pi_0(x)\right|=o_P(1).
	\end{equation}	
	{Note that, independently of the choice of the trimming function,  
		\[
		\E\left[\left\{\left(1-\pi_0(X)\right)\log{\phi(\gamma^TX)}+\pi_0(X)\log\left(1-\phi(\gamma^TX)\right)\right\}\tau(X)\right],
		\]	
		is maximized at $\gamma=\gamma_0$ because of condition (I1) and the fact that, for any $x\in\X$, the function 
		\[
		g_x(z)=\left\{\phi(\gamma_0^Tx)\log\frac{z}{\phi(\gamma_0^Tx)}+\left\{1-\phi(\gamma_0^Tx)\right\}\log\frac{1-z}{1-\phi(\gamma_0^Tx)}\right\}\tau(x),\quad z\in(0,1),
		\]
		is strictly negative for $z\neq \phi(\gamma_0^Tx)$ and  $g_x(\phi(\gamma_0^Tx))=0$. 
	}
	
	Next, we show uniform consistency of $\hat\pi_n$ using the decomposition \eqref{eqn:hat_pi_decomp} {and restricting to $\tilde{\X}$ as in \eqref{eqn:trimming}}. We have 
	\begin{equation}
		\label{eqn:sup_pi}
		\begin{aligned}
			\Vert\hat\pi_n-\pi_0\Vert_\infty&=\sup_{x\in\tilde\X}\left|\hat{g}_{\tilde{\gamma}_n}\left(\tilde{\gamma}^T_nx\right)-g_{\gamma_0}\left(\gamma^T_0x\right)\right|\\
			&=\sup_{x\in\tilde\X}\left|\hat{g}_{{\gamma}_0}\left({\gamma}^T_0x\right)-g_{\gamma_0}\left(\gamma^T_0x\right)\right|+\sup_{x\in\tilde\X}\left|\hat{g}_{\tilde{\gamma}_n}\left(\tilde{\gamma}^T_nx\right)-\hat{g}_{\gamma_0}\left(\gamma^T_0x\right)\right|\\
			&\leq \sup_{x\in\tilde\X}\sup_{t\leq\tau_0}\left|\hat{F}_n\left(t\mid \gamma^T_0X=\gamma^T_0x\right)-{F}_T\left(t\mid \gamma^T_0X=\gamma^T_0x\right)\right|\\
			&\qquad+\Vert\tilde{\gamma}^T_n-\gamma_0\Vert \sup_{x\in\tilde\X}\sup_{\gamma\in G}\left|\nabla_\gamma\hat{g}_{{\gamma}}\left({\gamma}^Tx\right)\right|.
		\end{aligned}
	\end{equation}
	The first term on the right-hand side of the equation converges to zero by Theorem 4.1 in \cite{KA99}. The second term converges to zero because of assumption (C1) and the fact that  
	\begin{equation}
		\label{eqn:hat_g}
		\sup_{x\in\tilde\X}\sup_{\gamma\in G}\left|\nabla_\gamma\hat{g}_{{\gamma}}\left({\gamma}^Tx\right)\right|=O_P(1).
	\end{equation} Indeed, \eqref{eqn:hat_g} follows from assumption (C9) and
	\begin{equation}
		\label{eqn:nabla_g}
		\sup_{x\in\tilde\X}\sup_{\gamma\in G}\left|\nabla_\gamma\hat{g}_{{\gamma}}\left({\gamma}^Tx\right)-\nabla_\gamma g_\gamma\left({\gamma}^Tx\right)\right|=o_P(1),
	\end{equation}
	which can be proved as in Lemma A.2 in \cite{lopez2013single}. Their estimator $\hat{G}_\theta(t|\lambda(\theta,x))$ is the same as our $\hat{F}_n(t|\gamma^TX=\gamma^Tx)$ if we replace $\theta$ by $\gamma$, consider $\lambda(\theta,x)=\theta^Tx$ and exchange $T$ with $C$ (they are interested in the conditional distribution of $C$ given $\lambda(\theta,X)$). This concludes the proof of \eqref{eqn:unif_cons}.
	 
	The consistency of $\hat\gamma_n$ then follows similarly to Theorem 1 in \cite{musta2020presmoothing}. There it is required that $\sup_{x\in\tilde\X}\left|\hat\pi(x)-\pi_0(x)\right|\to 0$ almost surely in order to obtain strong consistency of $\hat\gamma_n$. Here we restrict to convergence in probability and conclude via Theorem 1 in \cite{chen2003} formulated for M-estimators. 
\end{proof}
\begin{proof}[Proof of Theorem~\ref{theo:2}]
	The result follows from Theorem 2 in \cite{musta2020presmoothing} once we show that the assumptions (AN1)-(AN4) of that paper hold. {Note that the introduction of the trimming function $\tau(\cdot)$ would not change anything in the proof. It just allows us to restrict ourselves to the set $\tilde{\X}$ where the density of the index is bounded from below, in order to apply standard results from the literature.} The assumptions (AN1) and (AN3) of \cite{musta2020presmoothing} are the same as assumptions (C2), (N1) and (N2) here. It remains to verify assumptions (AN2) and (AN4) which for completeness we state below: 
	\begin{itemize}
		\item[(AN2)] $\pi_0(\cdot)$ belongs to a class of functions $\Pi$ such that 
		\begin{equation*}
			\label{eqn:entropy}
			\int_0^\infty \sqrt{\log N(\epsilon,\Pi,\Vert\cdot\Vert_{\infty})}\,\dd\epsilon<\infty,
		\end{equation*}
		where $N(\epsilon,\Pi,\Vert\cdot\Vert_{\infty})$ denotes the $\epsilon$-covering number of the space $\Pi$ with respect to $\Vert\pi\Vert_{{\infty}}=\sup_{x\in\tilde\X}|\pi(x)|$.
		\item[(AN4)] The estimator $\hat\pi(\cdot)$ satisfies the following properties:
		\begin{itemize}
			\item[(i)] $\p\left(\hat\pi(\cdot)\in\Pi\right)\to 1$.
			\item[(ii)] $\left\Vert\hat\pi(x)-\pi_0(x)\right\Vert_{{\infty}}=o_P(n^{-1/4})$. 
			\item[(iii)] There exists a function $\Psi$ such that 
			\[
			\begin{split}
				&\E^*\left[\left(\hat\pi(X)-\pi_0(X)\right)\left(\frac{1}{\phi(\gamma_0^TX)}+\frac{1}{1-\phi(\gamma_0^TX)}\right)\phi'(\gamma_0^TX)X{\tau(X)}\right]\qquad\qquad\\
				&\qquad=\frac{1}{n}\sum_{i=1}^n\Psi(Y_i,\Delta_i,X_i,Z_i)+R_n,
			\end{split}
			\]
			where $\E^*$ denotes the conditional expectation given the sample, taken with respect to the generic variable $X$, $\E[\Psi(Y,\Delta,X,Z)]=0$ and $\Vert R_n\Vert=o_P(n^{-1/2})$.
		\end{itemize}
	\end{itemize}
	
	\textit{Step 1.} For (AN2) we use the class of functions 
	\[
	\Pi=\left\{f:\X\to[0,1], \, f(x)=g(\gamma^Tx)\text{ for some } \gamma\in G \text{ and } g\in\tilde{\Pi}_\gamma\right\},
	\]
	where $\tilde{\Pi}_\gamma$ is the space of  continuously differentiable functions $g$ from $\X_\gamma$ 	to $[0,1]$ such that $\sup_{u\in\X_\gamma}|g'(u)|\leq M$ and
	\[
	\sup_{\substack{u_1\neq  u_2\\ u_1,u_2\in\X_\gamma}}\frac{|g'(u_1)-g'(u_2)|}{|u_1-u_2|^\xi}\leq M,
	\]
	for some $M>0$ and $\xi\in(0,1]$ independent of $\gamma$. The norm that we consider on $\Pi$ is the sup norm. By assumption (A2) we have $\pi_0\in\Pi$. Next we compute the $\epsilon$-covering number of $\Pi$. Let $\gamma_1,\dots,\gamma_{N_1}$ be a $\delta$-covering of the compact $G\in\R^p$ with respect to the $l_1$ norm. We have $N_1\leq \frac{K_1}{\delta^p}$ for some constant $K_1>0$. Consider the class $\tilde{\Pi}$ of continuously differentiable functions $g$ from $\mathcal{U}=\{\gamma^Tx\mid \gamma\in {G}, x\in\X\}$
	to $[0,1]$ such that $\sup_{u\in\mathcal{U}}|g'(u)|\leq M$ and
	\[
	\sup_{\substack{u_1\neq  u_2\\ u_1,u_2\in\mathcal{U}}}\frac{|g'(u_1)-g'(u_2)|}{|u_1-u_2|^\xi}\leq M.
	\]
	We have in particular that if $g\in\tilde{\Pi}$, then the restriction of $g$ to $\X_\gamma$ belongs to the class $\tilde{\Pi}_\gamma$ for any $\gamma\in\G$. Let $g_1,\dots,g_{N_2}$ be a $\rho$-covering of $\tilde{\Pi}$ with respect to the sup norm. Since $\mathcal{U}$ is bounded and convex ($G$ can be chosen to be a small neighborhood of $\gamma_0$), from Theorem 2.7.1 in \cite{VW96}, we have 
	\[
	\log N_2\leq K_2\left(\frac1\rho\right)^{1/(1+\xi)},
	\]
	for some constant $K_2>0$. For $1\leq i\leq N_1$ and $1\leq j\leq N_2$, define $f_{i,j}(x)=g_j(\gamma^T_ix)$. We show that $\{f_{i,j}\}$, $1\leq i\leq N_1$ and $1\leq j\leq N_2$ is an $\epsilon$-covering of $\Pi$. Let $f\in\Pi$, $f(x)=g(\gamma^Tx)$. From Whitney's theorem it follows that $g\in\tilde{\Pi}_\gamma$ can be extended to a function $\bar{g}\in\tilde{\Pi}$, i.e. $\bar{g}(u)=g(u)$ for $u\in\X_\gamma$. Let $i$ and $j$ be such that $\sup_{u\in\mathcal{U}}|\bar{g}(u)-g_j(u)|\leq \rho$  and $\Vert\gamma-\gamma_i\Vert\leq \delta. $ We then have
	\[
	\begin{aligned}
		\Vert f-f_{i,j}\Vert_\infty&\leq \sup_{x\in\X}\left|g\left(\gamma^Tx\right)-g_j\left(\gamma^Tx\right)\right|+\sup_{x\in\X}\left|g_j\left(\gamma^Tx\right)-g_j\left(\gamma^T_ix\right)\right|\\
		&\leq\sup_{u\in\mathcal{U}}|\bar{g}(u)-g_j(u)|+K_3\Vert\gamma-\gamma_i\Vert_1\\
		&\leq \rho+MK_3\delta.
	\end{aligned}
	\]
	Hence, if we take $\rho=\epsilon/2$ and $\delta=\epsilon/MK_3$ then we get an $\epsilon$-covering of $\Pi$. It follows that 
	\[
	\log N(\epsilon,\Pi,\Vert\cdot\Vert_\infty)\leq c_1-p\log\epsilon + c_2\left(\frac1\epsilon\right)^{1/(1+\xi)} \qquad c_1,c_2>0,
	\]
	and as a result (AN2) is satisfied. 
	
	\textit{Step 2.}	For assumption (AN4))(i), note that $\hat\pi_n(x)=\hat{g}_{\tilde{\gamma}_n}(\tilde{\gamma}^T_nx)$ with $\hat{g}_\gamma$ as in \eqref{def:hat_g}.  From consistency of $\tilde{\gamma}_n$ it follows that $\tilde{\gamma}_n\in G$. Let ${\hat{g}}'_{\tilde{\gamma}_n}$ denote the derivative of the function $u\mapsto \hat{g}_{\tilde{\gamma}_n}(u)$. Let $g_\gamma$ be as in \eqref{def:g}.   Next we show that 
	\begin{equation}
		\label{eqn:cond1}
		\sup_{x\in{\tilde\X}}|\hat{g}'_{\tilde{\gamma}_n}(\tilde{\gamma}^T_nx)-g'_{\gamma_0}(\gamma^T_0x)|=o_P(1),
	\end{equation}
	and
	\begin{equation}
		\label{eqn:cond2}
		\sup_{\substack{x_1,x_2\in{\tilde\X}\\
				\tilde{\gamma}^T_n	x_1\neq  \tilde{\gamma}^T_nx_2}}\frac{\left|\hat{g}'_{\tilde{\gamma}_n}\left(\tilde{\gamma}^T_nx_1\right)-g'_{\gamma_0}\left(\gamma^T_0x_1\right)-\hat{g}'_{\tilde{\gamma}_n}\left(\tilde{\gamma}^T_nx_2\right)+g'_{\gamma_0}\left(\gamma^T_0x_2\right)\right|}{|\tilde{\gamma}^T_nx_1-\tilde{\gamma}^T_nx_2|^\xi}=o_P(1),
	\end{equation}
	from which we can derive that (AN4)(i) is satisfied since $\sup_{x\in{\tilde\X}}|g'_{\gamma_0}(\gamma^T_0x)|<\infty$ and
	\[
	\sup_{\substack{x_1,x_2\in{\tilde\X}\\
			{\gamma}^T_0	x_1\neq  {\gamma}^T_0x_2}}\frac{\left|g'_{\gamma_0}\left(\gamma^T_0x_2\right)-g'_{\gamma_0}\left(\gamma^T_0x_1\right)\right|}{|{\gamma}^T_0x_1-{\gamma}^T_0x_2|^\xi}<\infty.
	\]
	From Theorems 4.1 and 4.2 in \cite{KA99} we have 
	\begin{equation}
		\label{eqn:KA1}	\sup_{x\in{\tilde\X}}|\hat{g}'_{{\gamma}_0}({\gamma}^T_0x)-g'_{\gamma_0}(\gamma^T_0x)|=o_P(1),
	\end{equation}
	and
	\begin{equation}
		\label{eqn:KA2}
		\sup_{\substack{x_1,x_2\in{\tilde\X}\\
				{\gamma}^T_0	x_1\neq  {\gamma}^T_0x_2}}\frac{\left|\hat{g}'_{{\gamma}_0}\left({\gamma}^T_0x_1\right)-g'_{\gamma_0}\left(\gamma^T_0x_1\right)-\hat{g}'_{{\gamma}_0}\left({\gamma}^T_0x_2\right)+g'_{\gamma_0}\left(\gamma^T_0x_2\right)\right|}{|{\gamma}^T_0x_1-{\gamma}^T_0x_2|^\xi}=o_P(1),
	\end{equation}
	by conditioning on the variable $\gamma_0^TX$. Next, we need to deal with the fact that we are using $\tilde{\gamma}_n^TX$ instead of $\gamma_0^TX$.
	
	We can write
	\[
	\log \hat{g}_{\tilde{\gamma}_n}(u)=\sum_{i=1}^n \Delta_i\log\left(1-W_i(u;\tilde{\gamma}_n)\right),
	\]
	where
	\[
	W_i(u;\tilde{\gamma}_n)=k\left(\frac{\tilde{\gamma}^T_nX_i-u}{b}\right)\bigg/\left(\sum_{j=1}^n\1_{\{T_j\geq T_i\}}k\left(\frac{\tilde{\gamma}^T_nX_j-u}{b}\right)\right).
	\]
	Hence 
	\begin{equation}
		\label{eqn:g'}
		\hat{g}'_{\tilde{\gamma}_n}(u)=\hat{g}_{\tilde{\gamma}_n}(u)\sum_{i=1}^n \Delta_i\frac{W'_i(u;\tilde{\gamma}_n)}{1-W_i(u;\tilde{\gamma}_n)}. 
	\end{equation}
	Using the triangular inequality we obtain
	\begin{equation}
		\label{eqn:der_g_hat}
		\begin{aligned}
			&\sup_{x\in{\tilde\X}}\left|\hat{g}'_{\tilde{\gamma}_n}(\tilde{\gamma}_n^Tx)-\hat{g}'_{{\gamma}_0}({\gamma}_0^Tx)\right|\\
			&\leq \sup_{x\in{\tilde\X}}\left|\hat{g}_{\tilde{\gamma}_n}(\tilde{\gamma}_n^Tx)-\hat{g}_{{\gamma}_0}({\gamma}_0^Tx)\right|\sup_{x\in{\tilde\X}}\left|\sum_{i=1}^n \Delta_i\frac{W'_i(\tilde{\gamma}_n^Tx;\tilde{\gamma}_n)}{1-W_i(\tilde{\gamma}_n^Tx;\tilde{\gamma}_n)}\right|\\
			&+\sup_{x\in{\tilde\X}}|\hat{g}'_{{\gamma}_0}({\gamma}_0^Tx)|\sup_{x\in\X}\left|\sum_{i=1}^n \Delta_i\left\{\frac{W'_i(\tilde{\gamma}_n^Tx;\tilde{\gamma}_n)}{1-W_i(\tilde{\gamma}_n^Tx;\tilde{\gamma}_n)}-\frac{W'_i({\gamma}_0^Tx;{\gamma}_0)}{1-W_i({\gamma}_0^Tx;{\gamma}_0)}\right\}\right|.
		\end{aligned}
	\end{equation}
	As is \eqref{eqn:sup_pi}, we have 
	\[
	\sup_{x\in{\tilde\X}}\left|\hat{g}_{\tilde{\gamma}_n}(\tilde{\gamma}_n^Tx)-\hat{g}_{{\gamma}_0}({\gamma}_0^Tx)\right|=O_P(n^{-1/2}).
	\]		
	Moreover, $\sup_{x\in{\tilde\X}}|\hat{g}'_{{\gamma}_0}({\gamma}_0^Tx)|=O_P(1)$ and, as in the proof of Proposition 4.1 in \cite{KA99} it can be seen that 
	\[
	\sup_{x\in{\tilde\X}}\left|\sum_{i=1}^n \Delta_i\frac{W'_i({\gamma}_0^Tx;{\gamma}_0)}{1-W_i({\gamma}_0^Tx;{\gamma}_0)}\right|=O_P(1).
	\]
	and 
	\[
	\sup_{x\in{\tilde\X}}\left|\sum_{i=1}^n \Delta_i\nabla_\gamma\left(\frac{W'_i({\gamma}^Tx;{\gamma})}{1-W_i({\gamma}^Tx;{\gamma})}\right)\bigg|_{\gamma_0}\right|=O_P(1/b).
	\]
	From a Taylor expansion and $\tilde{\gamma}_n-\gamma_0=O_P(n^{-1/2})$, we conclude that the left hand side of \eqref{eqn:der_g_hat} converges to zero.  Together with \eqref{eqn:KA1} we obtain \eqref{eqn:cond1}.
	
	Next we show that 
	\[
	\sup_{\substack{x_1,x_2\in{\tilde\X}\\
			\tilde{\gamma}^T_n	x_1\neq  \tilde{\gamma}^T_nx_2}}\frac{\left|\hat{g}'_{\tilde{\gamma}_n}\left(\tilde{\gamma}^T_nx_1\right)-\hat{g}'_{\gamma_0}\left(\gamma^T_0x_1\right)-\hat{g}'_{\tilde{\gamma}_n}\left(\tilde{\gamma}^T_nx_2\right)+\hat{g}'_{\gamma_0}\left(\gamma^T_0x_2\right)\right|}{|\tilde{\gamma}^T_nx_1-\tilde{\gamma}^T_nx_2|^\xi}=o_P(1).
	\]
	Note that by \eqref{eqn:g'} and the fact that $\sup_{x\in\X}|\hat{g}'_{{\gamma}}({\gamma}^Tx)|=O_P(1)$ for  both $\gamma=\gamma_0$ and $\gamma=\tilde{\gamma}_n$, it is sufficient to consider 
	\begin{equation}
		\label{eqn:V}
		\sup_{\substack{x_1,x_2\in{\tilde\X}\\
				\tilde{\gamma}^T_n	x_1\neq  \tilde{\gamma}^T_nx_2}}\frac{\left|V_{\tilde{\gamma}_n}\left(\tilde{\gamma}^T_nx_1\right)-V_{\gamma_0}\left(\gamma^T_0x_1\right)-V_{\tilde{\gamma}_n}\left(\tilde{\gamma}^T_nx_2\right)+V_{\gamma_0}\left(\gamma^T_0x_2\right)\right|}{|\tilde{\gamma}^T_nx_1-\tilde{\gamma}^T_nx_2|^\xi},
	\end{equation}
	where
	\[V_\gamma(u)=\sum_{i=1}^n \Delta_i\frac{W'_i(u;{\gamma})}{1-W_i(u;{\gamma})}.
	\]
	In addition we can also restrict the supremum over $x_1,x_2$ such that $\tilde{\gamma}^T_nx_1\neq \tilde{\gamma}^T_nx_2$ and ${\gamma}^T_0x_1\neq {\gamma}^T_0x_2$ because otherwise we would have $|\tilde{\gamma}^T_nx_1- \tilde{\gamma}^T_nx_2|\leq cn^{-1/2}$ and \eqref{eqn:V} would obviously hold for $\xi$ sufficiently small by using a Taylor expansion and $\sup_{u}V'_\gamma(u)=O_P(1/b)$.  Note also that \[
	\sup_{\substack{x_1,x_2\in{\tilde\X}\\
			\tilde{\gamma}^T_n	x_1\neq  \tilde{\gamma}^T_nx_2,\, {\gamma}^T_0	x_1\neq  {\gamma}^T_0x_2}}\frac{|{\gamma}^T_0x_1-{\gamma}^T_0x_2|}{|\tilde{\gamma}^T_nx_1-\tilde{\gamma}^T_nx_2|}=O_P(1),
	\] 
	hence the denominator can be replaced by $|{\gamma}^T_0x_1-{\gamma}^T_0x_2|^\xi$. 
	By Taylor expansion we obtain that the largest order term in \eqref{eqn:V} is 
	\[
	(\tilde{\gamma}_n-\gamma_0)	\sup_{\substack{x_1,x_2\in{\tilde\X}\\
			{\gamma}^T_0	x_1\neq  {\gamma}^T_0x_2}}\sum_{i=1}^n \Delta_i\frac{\nabla_\gamma\left(\frac{W'_i({\gamma}^Tx_1;{\gamma})}{1-W_i({\gamma}^Tx_1;{\gamma})}\right)\bigg|_{\gamma_0}-\nabla_\gamma\left(\frac{W'_i({\gamma}^Tx_2;{\gamma})}{1-W_i({\gamma}^Tx_2;{\gamma})}\right)\bigg|_{\gamma_0}}{|{\gamma}^T_0x_1-{\gamma}^T_0x_2|^\xi}.
	\] 
	Using Lemma A.1. in \cite{KA99} and that for a smooth function $l$ (in our case $l$ is the kernel function or its derivatives) we have $$|l(\frac{x-z}{b})-l(\frac{x-z}{b})|/|x-y|^\xi=O(b^{-\xi}), $$ uniformly over $x,y,z$, 
	we obtain that the expression in the previous equation is of the order $O_P(n^{-1/2}b^{-1-\xi})=o_P(1)$. Together with \eqref{eqn:KA2} this yields \eqref{eqn:cond2}.
	
	\textit{Step 3.}	Assumption (AN4)(ii) can be checked using  \eqref{eqn:sup_pi}. The first term on the right-hand side of that equation is of order $o_P(n^{-1/4})$ because of Theorem 4.1 in \cite{KA99} and the assumption (C6). The second term on the right-hand side of \eqref{eqn:sup_pi} is also of   order $o_P(n^{-1/4})$ because of assumptions (N3), (C9) and \eqref{eqn:nabla_g}.
	
	\textit{Step 4.}	For (AN4)(iii), using the decomposition \eqref{eqn:hat_pi_decomp} we have 
	\begin{equation}
		\begin{aligned}
			\label{eqn:expectation*}
			&\E^*\left[\left(\hat\pi(X)-\pi_0(X)\right)\left(\frac{1}{\phi(\gamma_0^TX)}+\frac{1}{1-\phi(\gamma_0^TX)}\right)\phi'(\gamma_0^TX)X{\tau(X)}\right]\\
			&=\E^*\left[\frac{\hat{F}_n(\tau_0\mid \gamma^T_0X)-{F}_T(\tau_0\mid \gamma^T_0X)}{\phi(\gamma_0^TX)\left\{1-\phi(\gamma_0^TX)\right\}}\phi'(\gamma_0^TX)X{\tau(X)}\right]\\
			&\quad+\E^*\left[\frac{\hat{g}_{\tilde{\gamma}_n}(\tilde{\gamma}^T_nX)-\hat{g}_{\gamma_0}(\gamma^T_0X)}{\phi(\gamma_0^TX)\left\{1-\phi(\gamma_0^TX)\right\}}\phi'(\gamma_0^TX)X{\tau(X)}\right].
		\end{aligned}
	\end{equation}
	Since $\sup_{x\in\X}H([t,\infty)|\gamma^T_0x)<1$ by condition  \eqref{eqn:cond_support}, from Theorem 3.2 in \cite{DA2002}, it follows as in \cite{musta2020presmoothing} (see proof of Theorem 5) that the first term on the right-hand side of \eqref{eqn:expectation*} is equal to 
	$
	\frac{1}{n}\sum_{i=1}^n \psi(Y_i,\Delta_i,X_i)+R_n
	$
	with $\Vert R_n\Vert=o_P(n^{-1/2})$ and 
	\begin{equation}
		\label{eqn:iid_pi}
		\begin{split}
			\psi(Y,\Delta,X)	=-
			\left\{\frac{\Delta\1_{\{Y\leq\tau_0\}}}{H\left([Y,\infty)|\gamma^T_0X\right)}-\int_0^{Y\wedge \tau_0}\frac{H_1\left(ds|\gamma^T_0X\right)}{H^2\left([s,\infty)|\gamma^T_0X\right)}\right\}\frac{\phi'\left(\gamma_0^TX\right)}{\phi\left(\gamma_0^TX\right)}X{\tau(X)}.
		\end{split}
	\end{equation}	
	For the second term in \eqref{eqn:expectation*}, by the mean value theorem, we write
	\[
	\begin{aligned}
		\hat{g}_{\tilde{\gamma}_n}\left(\tilde{\gamma}^T_nx\right)-\hat{g}_{\gamma_0}\left(\gamma^T_0x\right)
		&=(\tilde{\gamma}_n-\gamma_0)^T\nabla_\gamma\hat{g}_\gamma\left(\gamma^Tx\right)\big\vert_{\gamma^*}\\
		&=(\tilde{\gamma}_n-\gamma_0)^T\nabla_\gamma\hat{g}_\gamma\left(\gamma^Tx\right)\big\vert_{\gamma^*},
	\end{aligned}
	\]
	for some $\gamma^*$ such that $\Vert\gamma^*-\gamma_0\Vert\leq \Vert\tilde\gamma_n-\gamma_0\Vert$. Using \eqref{eqn:nabla_g}
	and assumption (C9),  we obtain 
	\[
	\hat{g}_{\tilde{\gamma}_n}\left(\tilde{\gamma}^T_nx\right)-\hat{g}_{\gamma_0}\left(\gamma^T_0x\right)=(\tilde{\gamma}_n-\gamma_0)^T\nabla_\gamma{g}_\gamma\left(\gamma^Tx\right)\big\vert_{\gamma_0}+o_P(n^{-1/2}),
	\]
	where the $o_P(n^{-1/2})$ term is uniform with respect to $x\in{\tilde\X}$. 
	It follows that 
	\[
	\begin{aligned}
		&\E^*\left[\frac{\hat{g}_{\tilde{\gamma}_n}(\tilde{\gamma}^T_nX)-\hat{g}_{\gamma_0}(\gamma^T_0X)}{\phi(\gamma_0^TX)\left\{1-\phi(\gamma_0^TX)\right\}}\phi'(\gamma_0^TX)X{\tau(X)}\right]\\
		&=\E\left[\frac{\nabla_\gamma{g}_\gamma(\gamma^TX)^T\big\vert_{\gamma_0}}{\phi(\gamma_0^TX)\left\{1-\phi(\gamma_0^TX)\right\}}\phi'(\gamma_0^TX)X{\tau(X)}\right](\tilde{\gamma}_n-\gamma_0)+o_P(n^{-1/2}),
	\end{aligned}
	\]
because $E^*$ is just the expectation with respect to the variable $X$.
	By definition of $g_\gamma(\gamma^Tx)$ and a Taylor expansion we have 
	\[
	\begin{aligned}
		g_\gamma\left(\gamma^Tx\right)&=\E\left[1-\phi\left(\gamma_0^TX\right)|\gamma^TX=\gamma^Tx\right]\\
		&=1-\phi\left(\gamma^Tx\right)+(\gamma-\gamma_0)^T\E\left[X|\gamma^TX=\gamma^Tx\right]\phi'\left(\gamma^Tx\right)\\
		&\quad+\frac12(\gamma-\gamma_0)^T\E\left[\phi''\left(\gamma^T_*x\right)XX^T|\gamma^TX=\gamma^Tx\right](\gamma-\gamma_0),
	\end{aligned}
	\]
	for some $\Vert\gamma_*-\gamma_0\Vert\leq \Vert\gamma-\gamma_0\Vert$.
	Hence 
	\[
	\nabla_\gamma{g}_\gamma\left(\gamma^Tx\right)\big\vert_{\gamma_0}=-\phi'\left(\gamma^T_0x\right)x+\E\left[X|\gamma^T_0X=\gamma^T_0x\right]\phi'\left(\gamma^T_0x\right).
	\]
	and consequently
	\[
	\begin{aligned}
		&\E\left[\frac{\nabla_\gamma{g}_\gamma(\gamma^TX)^T\big\vert_{\gamma_0}}{\phi(\gamma_0^TX)\left\{1-\phi(\gamma_0^TX)\right\}}\phi'(\gamma_0^TX)X{\tau(X)}\right]\\
		&=-\E\left[\frac{\phi'(\gamma^T_0X)X^T}{\phi(\gamma_0^TX)\left\{1-\phi(\gamma_0^TX)\right\}}\phi'(\gamma_0^TX)X{\tau(X)}\right]\\
		&\quad +\E\left[\frac{\E[X^T|\gamma^T_0X]\phi'(\gamma^T_0X)}{\phi(\gamma_0^TX)\left\{1-\phi(\gamma_0^TX)\right\}}\phi'(\gamma_0^TX)X{\tau(X)}\right]\\
		&=\E\left[\frac{\phi'(\gamma^T_0X)^2\left\{\E[X^T|\gamma_0^TX]\E[X\tau(X)|\gamma_0^TX]-\E[X^TX{\tau(X)}|\gamma_0^TX]\right\}}{\phi(\gamma_0^TX)\left\{1-\phi(\gamma_0^TX)\right\}}\right],
	\end{aligned}
	\]
	 We denote this expression by $Q$. This yields
	\[
	\begin{aligned}
		&\E^*\left[\frac{\hat{g}_{\tilde{\gamma}_n}(\tilde{\gamma}^T_nX)-\hat{g}_{\gamma_0}(\gamma^T_0X)}{\phi(\gamma_0^TX)\left\{1-\phi(\gamma_0^TX)\right\}}\phi'(\gamma_0^TX)X{\tau(X)}\right]=Q(x)(\tilde{\gamma}_n-\gamma_0)+o_P(n^{-1/2}),
	\end{aligned}
	\]
	and by assumption (N3)
	\[
	\begin{aligned}
		&	\E^*\left[\left(\hat\pi(X)-\pi_0(X)\right)\left(\frac{1}{\phi(\gamma_0^TX)}+\frac{1}{1-\phi(\gamma_0^TX)}\right)\phi'(\gamma_0^TX)X{\tau(X)}\right]\\
		&=\frac{1}{n}\sum_{i=1}^n \Psi(Y_i,\Delta_i,X_i)+Q\sum_{i=1}^n \zeta(Y_i,\Delta_i,X_i,Z_i) +R_n,
	\end{aligned}
	\]
	with $\Vert R_n\Vert=o_P(n^{-1/2})$. This means that assumption (AN4)(iii) of \cite{musta2020presmoothing} is satisfied with $\varPsi(Y,\Delta,X,Z)=\psi(Y,\Delta,X,Z)+\zeta(Y,\Delta,X,Z)$.
	
	This concludes the verification of the assumptions of Theorem 3 in \cite{musta2020presmoothing}. It also follows that the covariance matrix is given by
	\begin{equation}
		\label{def:Sigma_gamma}
		\Sigma_\gamma=(\Gamma'_1\Gamma_1)^{-1}\Gamma'_1V\Gamma_1(\Gamma'_1\Gamma_1)^{-1}=\Gamma_1^{-1}V\Gamma_1^{-1},
	\end{equation}	
	where $V=Var(\Psi(Y,\Delta,X,Z))$ and 
	\[
	\Gamma_1=-\E\left[\left(\frac{1}{\phi(\gamma_0^TX)}+\frac{1}{1-\phi(\gamma_0^TX)}\right)\phi'(\gamma_0^TX)^2XX^T{\tau(X)}\right].
	\]
\end{proof}

 \begin{acks}[Acknowledgments]
I. Van Keilegom acknowledges financial support from the European Research Council (2016-2021, Horizon 2020 and grant agreement 694409).
V. Patilea gratefully acknowledges support from the Joint Research Initiative ‘Models and mathematical processing of very large data’ under the aegis of Risk Foundation, in partnership with MEDIAMETRIE and GENES, France.
For the simulations we used the Lisa cluster of the Dutch national Supercomputer. The data that were analysed in Section \ref{sec:simulations} have been provided by the SEER programme (www.seer.cancer.gov) research data (1973–2014), National Cancer Institute, Division of Cancer Control and Population Sciences, Surveillance Research Program.
 \end{acks}

%%%%%%%%%%%%%%%%%%%%%%%%%%%%%%%%%%%%%%%%%%%%%%
%% Supplementary Material, if any, should   %%
%% be provided in {supplement} environment  %%
%% with title and short description.        %%
%%%%%%%%%%%%%%%%%%%%%%%%%%%%%%%%%%%%%%%%%%%%%%
%\begin{supplement}
%\stitle{???}
%\sdescription{???.}
%\end{supplement}

%% if your bibliography is in bibtex format, uncomment commands:
%\bibliographystyle{imsart-nameyear} % Style BST file (imsart-number.bst or imsart-nameyear.bst)
\bibliography{cure_models}       % Bibliography file (usually '*.bib')

%% or include bibliography directly:
% \begin{thebibliography}{cure_models}
% \bibitem{b1}
% \end{thebibliography}

\end{document}